\documentclass[10pt,twocolumn]{article}
\usepackage[cmex10]{amsmath}
\usepackage{amsfonts}
\usepackage{amssymb}
\usepackage{amsthm}
\usepackage[pdftex]{graphicx}
\usepackage{subfigure}
\usepackage{array}
\usepackage{cases}
\usepackage{bm}
\usepackage{upgreek}
\usepackage{amsmath}
\usepackage[left=2cm,right=2cm,top=2cm,bottom=2cm]{geometry}
\usepackage{authblk}
\usepackage{setspace}
\usepackage{cite}
\usepackage{enumerate}
\usepackage{algorithmic}
\usepackage{algorithm}
\usepackage{multirow}
\usepackage{graphicx}
\usepackage{booktabs}
\usepackage[font=small,labelfont=bf]{caption}
\usepackage[T1]{fontenc}
\usepackage{pslatex}

\hyphenation{op-tical net-works semi-conduc-tor}

\newtheorem{lemma}{Lemma}
\newtheorem{problem}{Problem}
\newtheorem{assumption}{Assumption}
\newtheorem{theorem}{Theorem}
\newtheorem{remark}{Remark}
\newtheorem{definition}{Definition}

\newtheorem{proposition}{Proposition}
\newtheorem{corollary}{Corollary}

\begin{document}

\title{\LARGE \bf Stabilizing Transmission Intervals for Nonlinear Delayed Networked Control Systems\\ \textsc{[Extended Version]}}
\author{Domagoj~Toli\'{c}~and~Sandra~Hirche
\thanks{D. Toli\'{c} is with Faculty of Electrical Engineering and Computing, University of Zagreb, Unska 3, 10000 Zagreb, Croatia. S. Hirche is with the Chair of Information-oriented Control, Technical University of Munich, Arcisstra{\ss}e 21, D-80290 Munich, Germany. (e-mail: domagoj.tolic@fer.hr, hirche@tum.de).}
\thanks{This work has been supported by the European Community Seventh Framework Programme under grant No. 285939 (ACROSS)  and by the Institute for Advanced Study of Technical University of Munich, Germany. D. Toli\'{c} is partially supported by the Air Force Research Laboratory under agreement number FA8655-13-1-3055.}
\thanks{The U.S. Government is authorized to reproduce and distribute reprints for Governmental purposes notwithstanding any copyright notation thereon. The views and conclusions contained herein are those of the authors and should not be interpreted as necessarily representing the official policies or endorsements, either expressed or implied, of the Air Force Research Laboratory or the U.S. Government.}
}

\date{}
\maketitle

\begin{abstract}
\textbf{\small{In this article, we consider a nonlinear process with delayed dynamics to be controlled over a communication network in the presence of disturbances and study robustness of the resulting closed-loop system with respect to network-induced phenomena such as sampled, distorted, delayed and lossy data as well as scheduling protocols. For given plant-controller dynamics and communication network properties (e.g., propagation delays and scheduling protocols), we quantify the control performance level (in terms of $\mathcal{L}_p$-gains) as the transmission interval varies. Maximally Allowable Transfer Interval (MATI) labels the greatest transmission interval for which a prescribed $\mathcal{L}_p$-gain is attained. The proposed methodology combines impulsive delayed system modeling with Lyapunov-Razumikhin techniques to allow for MATIs that are smaller than the communication delays. Other salient features of our methodology are the consideration of variable delays, corrupted data and employment of model-based estimators to prolong MATIs. The present stability results are provided for the class of Uniformly Globally Exponentially Stable (UGES) scheduling protocols. The well-known Round Robin (RR) and Try-Once-Discard (TOD) protocols are examples of UGES protocols. Finally, two numerical examples are provided to demonstrate the benefits of the proposed approach.}}
\end{abstract}

\section{Introduction}		\label{sec:intro} 

Networked Control Systems (NCSs) are spatially distributed systems for which the communication between sensors, actuators and controllers is realized by a shared (wired or wireless) communication network \cite{IEEEhowto:jhespanha}. NCSs offer several advantages, such as reduced installation and maintenance costs as well as greater flexibility, over conventional control systems in which parts of control loops exchange information via dedicated point-to-point connections. At the same time, NCSs generate imperfections (such as sampled, corrupted, delayed and lossy data) that impair the control system performance and can even lead to instability. In order to reduce data loss (i.e., packet collisions) among uncoordinated NCS links, scheduling protocols are employed to govern the communication medium access. Since the aforementioned network-induced phenomena occur simultaneously, the investigation of their cumulative adverse effects on the NCS performance is of particular interest. This investigation opens the door to various trade-offs while designing NCSs. For instance, dynamic scheduling protocols (refer to \cite{wpmhheemels2010} and \cite{mmamduhiCDC2014}), model-based estimators \cite{testrada2010} or smaller transmission intervals can compensate for greater delays at the expense of increased implementation complexity/costs~\cite{dchristmann2014}.

In this article, we consider a nonlinear delayed system to be controlled by a nonlinear delayed dynamic controller over a communication network in the presence of exogenous/modeling disturbances, scheduling protocols among lossy NCS links, time-varying signal delays, time-varying transmission intervals and distorted data. Notice that networked control is not the only source of delays and that delays might be present in the plant and controller dynamics as well. Therefore, we use the term delayed NCSs. The present article takes up the emulation-based approach from \cite{dtolicCDC2014} for investigating the cumulative adverse effects in NCSs and extends it towards plants and controllers with delayed dynamics as well as towards nonuniform time-varying NCS link delays. In other words, different NCS links induce different and nonconstant delays. It is worth mentioning that \cite{dtolicCDC2014} generalizes \cite{wpmhheemels2010} towards corrupted data and the so-called \textit{large delays}. Basically, we allow communication delays to be larger than the transmission intervals. To the best of our knowledge, the work presented herein is the most comprehensive study of the aforementioned cumulative effects as far as the actual plant-controller dynamics (i.e., time-varying, nonlinear, delayed and with disturbances) and interconnection (i.e., output feedback) as well as the variety of scheduling protocols (i.e., UGES protocols) and other network-induced phenomena are concerned (i.e., variable delays, lossy communication channels with distortions). For instance, \cite{fmazenc2013} focuses on time-varying nonlinear control affine plants (i.e., no delayed dynamics in the plant nor controller) and state feedback with a constant delay whilst neither exogenous/modeling disturbances, distorted data nor scheduling protocols are taken into account. The authors in \cite{akruszewski2012} and \cite{bzhang2014} consider linear control systems, impose Zero-Order-Hold (ZOH) sampling and do not consider noisy data nor scheduling protocols. In addition, \cite{akruszewski2012} does not take into account disturbances. Similar comparisons can be drawn with respect to other related works (see \cite{gcwalsh2002,IEEEhowto:jhespanha,wpmhheemels2010,dtolicCDC2014,fmazenc2013} and the references therein).

In order to account for large delays, our methodology employs impulsive delayed system modeling and Lyapunov-Razumikhin techniques when computing Maximally Allowable Transmission Intervals (MATIs) that provably stabilize NCSs for the class of Uniformly Globally Exponentially Stable (UGES) scheduling protocols (to be defined later on). Besides MATIs that merely stabilize NCSs, our methodology is also capable to design MATIs that yield a prespecified level of control system performance. As in \cite{wpmhheemels2010}, the performance level is quantified by means of $\mathcal{L}_p$-gains. According to the batch reactor case study provided in \cite{dtolicCDC2014}, MATI conservativeness repercussions of our approach for the small delay case appear to be modest in comparison with~\cite{wpmhheemels2010}. This conservativeness emanates from the complexity of the tools for computing $\mathcal{L}_p$-gains of delayed (impulsive) systems as pointed out in Section~\ref{sec:example} and, among others, \cite{dpborgers2014}. On the other hand, delayed system modeling (rather than ODE modeling as in~\cite{wpmhheemels2010}) allows for the employment of model-based estimators, which in turn increases MATIs (see Section~\ref{sec:example} for more). In addition, real-life applications are characterized by corrupted data due to, among others, measurement noise and communication channel distortions. In order to include distorted information (in addition to exogenous/modeling disturbances) into the stability analyses, we propose the notion of $\mathcal{L}_p$-stability with bias.

The main contributions of this article are fourfold: a) the design of MATIs in nonlinear delayed NCSs with UGES protocols even for the so-called large delays; b) the Lyapunov-Razumikhin-based procedure for rendering $\mathcal{L}_p$-stability of nonlinear impulsive delayed systems and computing the associated $\mathcal{L}_p$-gains; c) the consideration of NCS links with nonidentical time-dependent delays; and d) the inclusion of model-based estimators. In contrast to our conference paper \cite{dtolicCDC2014}, this article incorporates variable delays, contains proofs, provides a nonlinear numerical example with delayed plant dynamics and designs model-based estimation that prolongs MATIs \cite{testrada2010}. Furthermore, this article accompanies \cite{dtolicTAC2017}.

The remainder of this article is organized as follows. Section~\ref{sec:preliminaries} presents the utilized notation and stability notions regarding impulsive delayed systems. Section~\ref{sec:problem} states the problem of finding MATIs for nonlinear delayed NCSs with UGES protocols in the presence of nonuniform communication delays and exogenous/modeling disturbances. A methodology to solve the problem is presented in Section~\ref{sec:methodology}. Detailed numerical examples are provided in Section~\ref{sec:example}. Conclusions and future challenges are in Section~\ref{sec:concl}. The proofs are provided in the Appendix.

\section{Preliminaries}		\label{sec:preliminaries}

\subsection{Notation}		\label{subsec:notation}

To simplify notation, we use $(x,y) := [x^{\top} \quad y^{\top}]^{\top}$. The dimension of a vector $x$ is denoted $n_x$. Next, let $f : \mathbb{R} \rightarrow \mathbb{R}^n$ be a Lebesgue measurable function on $[a,b] \subset \mathbb{R}$. We use
\begin{align}
	\| f[a,b] \|_p := \left( \int_{\substack{[a,b]}} \|f(s)\|^p \mathrm{d}s	\right)^{\frac{1}{p}}	\nonumber
\end{align}
to denote the $\mathcal{L}_p$-norm of $f$ when restricted to the interval $[a,b]$. If the corresponding norm is finite, we write $f \in \mathcal{L}_p [a,b]$. In the above expression, $\| \cdot \|$ refers to the Euclidean norm of a vector. If the argument of $\| \cdot \|$ is a matrix $A$, then it denotes the induced 2-norm of $A$. Furthermore, $|\cdot|$ denotes the (scalar) absolute value function. The $n$-dimensional vector with all zero entries is denoted $\textbf{0}_n$. Likewise, the $n$ by $m$ matrix with all zero entries is denoted $\textbf{0}_{n \times m}$. The identity matrix of dimension $n$ is denoted $I_n$. In addition, $\mathbb{R}^n_+$ denotes the nonnegative orthant. The natural numbers are denoted $\mathbb{N}$ or $\mathbb{N}_0$ when zero is included.

Left-hand and right-hand limits are denoted $x(t^-)=\lim_{t' \nearrow t} x(t')$ and $x(t^+)=\lim_{t' \searrow t} x(t')$, respectively. Next, for a set $\mathcal{S} \subseteq \mathbb{R}^n$, let $PC([a, b], \mathcal{S}) = \big\{\phi: [a, b] \rightarrow \mathcal{S} \; \big| \; \phi(t)=\phi(t^+)$ for every $t \in [a, b), \; \phi(t^-)$ exists in $\mathcal{S}$ for all $t \in (a, b]$ and $\phi(t^-)=\phi(t)$ for all but at most a finite number of points $t \in (a, b] \big\}$. Observe that $PC([a, b], \mathcal{S})$ denotes the family of right-continuous functions on $[a,b)$ with finite left-hand limits on $(a,b]$ contained in $S$ and whose discontinuities do not accumulate in finite time. Finally, let $\tilde{\textbf{0}}_{n_x}$ denote the zero element of $PC([-d, 0], \mathbb{R}^{n_x})$.

\subsection{Impulsive Delayed Systems}	\label{subsec:hybrid systems with switches}

In this article, we consider nonlinear impulsive delayed systems
\begin{align}	\label{eq:general hybrid sys}
\Sigma \left\{
		\begin{aligned}
			  &\chi(t^+) = h_{\chi}(t,\chi_t) \;\;\;\;\;\; \qquad t \in \mathcal{T}\\
			&\left.
			\begin{aligned}
				\; \dot{\chi}(t) &= f_{\chi}(t,\chi_t,\omega) \\
				\; y &= \ell_{\chi}(t,\chi_t,\omega)
			\end{aligned} \;\;	\right\} \quad \mbox{otherwise}\ ,
		\end{aligned}	\right.
\end{align}
where $\chi \in \mathbb{R}^{n_{\chi}}$ is the state, $\omega \in \mathbb{R}^{n_{\omega}}$ is the input and $y \in \mathbb{R}^{n_{y}}$ is the output. The functions $f_{\chi}$ and $h_{\chi}$ are regular enough to guarantee forward completeness of solutions which, given initial time $t_{0}$ and initial condition $\chi_{t_0} \in PC([-d,0], \mathbb{R}^{n_{\chi}})$, where $d \geq 0$ is the maximum value of all time-varying delay phenomena, are given by right-continuous functions $t \mapsto \chi(t) \in PC([t_0-d, \infty], \mathbb{R}^{n_{\chi}})$. Furthermore, $\chi_t$ denotes the translation operator acting on the trajectory $\chi(\cdot)$ defined by $\chi_t(\theta):=\chi(t+\theta)$ for $-d \leq \theta \leq 0$. In other words, $\chi_t$ is the restriction of trajectory $\chi(\cdot)$ to the interval $[t-d,t]$ and translated to $[-d,0]$. For $\chi_t \in PC([-d,0],\mathbb{R}^{n_{\chi}})$, the norm of $\chi_t$ is defined by $\|\chi_t\| =\sup_{-d \leq \theta \leq 0} \|\chi_t(\theta)\|$. Jumps of the state are denoted $\chi(t^+)$ and occur at time instants $t \in \mathcal{T} := \{t_1,t_2,\ldots \}$, where $t_i < t_{i+1}$, $i \in \mathbb{N}_0$. The value of the state after a jump is given by $\chi(t^+)$ for each $t \in \mathcal{T}$. For a comprehensive discussion regarding the solutions to (\ref{eq:general hybrid sys}) considered herein, refer to \cite[Chapter 2 \& 3]{ghballinger1999}. Even though the considered solutions to (\ref{eq:general hybrid sys}) allow for jumps at $t_0$, we exclude such jumps in favor of notational convenience.

\begin{definition}	[Uniform Global Stability]	\label{def:UGS}
For $\omega \equiv \textbf{0}_{n_{\omega}}$, the system $\Sigma$ is said to be Uniformly Globally Stable (UGS) if for any $\epsilon > 0$ there exists $\delta(\epsilon) > 0$ such that, for each $t_{0} \in \mathbb{R}$ and each $\chi_{t_0} \in PC([-d,0], \mathbb{R}^{n_{\chi}})$ satisfying $\|\chi_{t_0}\| < \delta(\epsilon)$, each solution $t \mapsto \chi(t) \in PC([t_0-d, \infty], \mathbb{R}^{n_{\chi}})$ to $\Sigma$ satisfies $\| \chi(t) \| < \epsilon$ for all $t \geq t_0$ and $\delta(\epsilon)$ can be chosen such that $\lim_{\epsilon \rightarrow \infty} \delta(\epsilon)~=~\infty$.
\end{definition}

\begin{definition} [Uniform Global Asymptotic Stability]	\label{def:UGAS}
For $\omega \equiv \textbf{0}_{n_{\omega}}$, the system $\Sigma$ is said to be Uniformly Globally Asymptotically Stable (UGAS) if it is UGS and uniformly globally attractive, i.e., for each $\eta,\zeta>0$ there exists $T(\eta,\zeta) > 0$ such that $\| \chi(t) \| < \eta$ for every $t \geq t_0 + T(\eta,\zeta)$ and every $\| \chi_{t_0} \| < \zeta$.
\end{definition}

\begin{definition}	[Uniform Global Exponential Stability] \label{def:UGES}
For $\omega \equiv \textbf{0}_{n_{\omega}}$, the system $\Sigma$ is said to be Uniformly Globally Exponentially Stable (UGES) if there exist positive constants $\lambda$ and $M$ such that, for each $t_{0} \in \mathbb{R}$ and each $\chi_{t_0} \in PC([-d,0], \mathbb{R}^{n_{\chi}})$, each solution $t \mapsto \chi(t) \in PC([t_0-d, \infty], \mathbb{R}^{n_{\chi}})$ to $\Sigma$ satisfies $\| \chi(t) \| \leq M \|\chi_{t_0}\| e^{-\lambda(t-t_0)}$ for each $t \geq t_0$.
\end{definition}

\begin{definition} [$\mathcal{L}_p$-Stability with Bias $b$]	\label{def:stability bias}
Let $p \in [1,\infty]$. The system $\Sigma$ is $\mathcal{L}_p$-stable with bias $b(t) \equiv b \geq 0$ from $\omega$ to $y$ with (linear) gain $\gamma \geq 0$ if there exists $K \geq 0$ such that, for each $t_{0} \in \mathbb{R}$ and each $\chi_{t_0} \in PC([-d,0], \mathbb{R}^{n_{\chi}})$, each solution to $\Sigma$ from $\chi_{t_0}$ satisfies $\|y[t_{0},t]\|_p \leq K \|\chi_{t_0}\| + \gamma \|\omega[t_{0},t]\|_p + \|b[t_0,t]\|_p$ for each $t \geq t_{0}$.
\end{definition}

\begin{definition} [$\mathcal{L}_p$-Detectability]	\label{def:detectability}
Let $p \in [1,\infty]$. The state $\chi$ of $\Sigma$ is $\mathcal{L}_p$-detectable from $(y, \omega)$ with (linear) gain $\gamma \geq 0$ if there exists $K \geq 0$ such that, for each $t_{0} \in \mathbb{R}$ and each $\chi_{t_0} \in PC([-d,0], \mathbb{R}^{n_{\chi}})$, each solution to $\Sigma$ from $\chi_{t_0}$ satisfies $\|\chi[t_{0},t]\|_p \leq K \|\chi_{t_0}\| + \gamma \|y[t_{0},t]\|_p + \gamma \|\omega[t_{0},t]\|_p$ for each $t \geq t_{0}$.
\end{definition}

\noindent
Definitions \ref{def:UGS}, \ref{def:UGAS} and \ref{def:UGES} are motivated by \cite{IEEEhowto:hkhalil}, while Definition \ref{def:detectability} is inspired by \cite{dnesic2004}. Definition \ref{def:stability bias} is motivated by \cite{dnesic2004} and \cite{zjiang1994}. When $b=0$, we say ``$\mathcal{L}_p$-stability'' instead of ``$\mathcal{L}_p$-stability with bias $0$''.

\section{Problem Formulation}		\label{sec:problem}

Consider a nonlinear control system consisting of a plant with delayed dynamics
\begin{align}
	\dot{x}_p &= f_p (t,x_{p_t},u,\omega_p),		\nonumber	\\
	y &= g_p (t,x_{p_t}),	\label{eq:plant}
\end{align}
and a controller with delayed dynamics
\begin{align}
	\dot{x}_c &= f_c (t,x_{c_t},y,\omega_c),	\nonumber \\
	u &= g_c (t,x_{c_t})	\label{eq:controller},
\end{align}
where $x_p \in \mathbb{R}^{n_p}$ and $x_c \in \mathbb{R}^{n_c}$ are the states, $y \in \mathbb{R}^{n_y}$ and $u \in \mathbb{R}^{n_u}$ are the outputs, and $(u,\omega_p) \in \mathbb{R}^{n_u} \times \mathbb{R}^{n_{\omega_p}}$ and $(y,\omega_c) \in \mathbb{R}^{n_y} \times \mathbb{R}^{n_{\omega_c}}$ are the inputs of the plant and controller, respectively, where $\omega_p$ and $\omega_c$ are external disturbances to (and/or modeling uncertainties of) the plant and controller, respectively. The translation operators $x_{p_t}$ and $x_{c_t}$ are defined in Section \ref{subsec:hybrid systems with switches} while the corresponding plant and controller delays are $d_p \geq 0$ and $d_c \geq 0$, respectively. For notational convenience, constant plant and controller delays are considered.

Let us now model the communication network between the plant and controller over which intermittent and realistic exchange of information takes place (see Figure~\ref{fig:NCS}). The value of $u$ computed by the controller that arrives at the plant is denoted $\hat{u}$. Similarly, the values of $y$ that the controller actually receives are denoted $\hat{y}$. Consequently, we have 
\begin{align} \label{eqn:InterconnectionAssignment}
	u = \hat{u}, \qquad y = \hat{y},
\end{align}
on the right hand sides of (\ref{eq:plant}) and (\ref{eq:controller}). In our setting, the quantity $\hat u$ is the delayed and distorted input $u$ fed to the plant~(\ref{eq:plant}) while the quantity $\hat y$ is the delayed and distorted version of $y$ received by the controller~(\ref{eq:controller}). We proceed further by defining the error vector
\begin{equation}
	e = \begin{bmatrix}
				 			e_{y}(t) \\
				 			e_{u}(t)
				 	\end{bmatrix} 
					:=	
	\begin{bmatrix}
							\hat{y}(t) - y_t	\\
							\hat{u}(t) - u_t
			 \end{bmatrix},
 \label{sensing error}
\end{equation}
where $y_t$ and $u_t$ are translation operators and the maximal network-induced delay $d \geq 0$ (e.g., propagation delays and/or delays arising from protocol arbitration). The operator $(y_t,u_t)$ in (\ref{sensing error}) delays each component of $(y,u)$ for the respective delay. Essentially, if the $i^{\mathrm{th}}$ component of $(y(t),u(t))$, that is $(y(t),u(t))_i$, is transmitted with delay $d_i: \mathbb{R} \rightarrow \mathbb{R}_+$, then the $i^{\mathrm{th}}$ component of $(y_t,u_t)$, that is $(y_t,u_t)_i$, is in fact $(y(t-d_i(t)),u(t-d_i(t)))_i$. Accordingly, $d := \max \{\sup_{t \in \mathbb{R}}d_1(t),\ldots,\sup_{t \in \mathbb{R}}d_{n_y+n_u}(t)\}$.

\begin{figure}
  \begin{center}
    \includegraphics[width=0.48\textwidth]{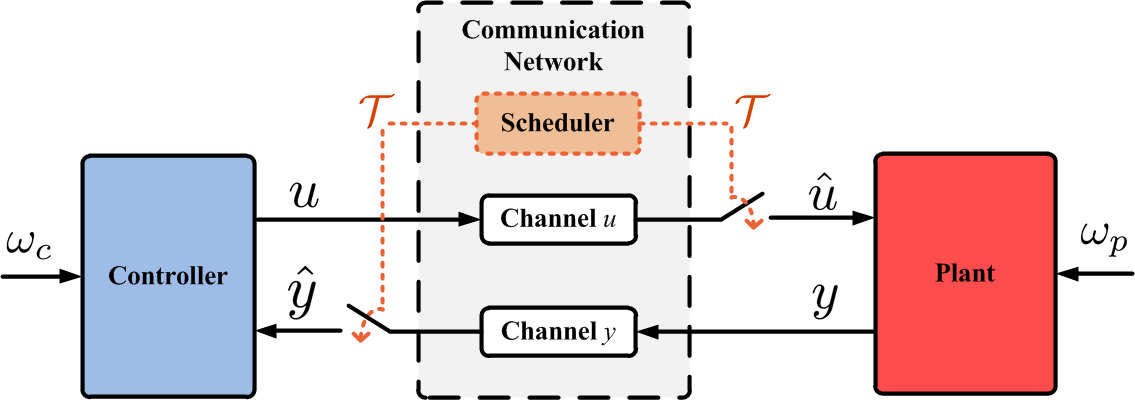}
  \end{center}
  \caption{A diagram of a control system with the plant and controller interacting over a communication network with intermittent information updates. The two switches indicate that the information between the plant and controller are exchanged (complying with some scheduling protocol among the NCS links) at discrete time instants belonging to a set~$\mathcal{T}$. The communication delays in each NCS link are time varying and, in general, different.}
  \label{fig:NCS}
\end{figure}

Due to intermittent transmissions of the components of $y$ and $u$, the respective components of $\hat{y}$ and $\hat{u}$ are updated at time instants $t_1,t_2, \ldots, t_i, \ldots \in \mathcal{T}$, i.e.,
\begin{align}	\label{eq:jump equations}
		\left.
			\begin{aligned}
				\hat{y}(t_i^+) &= y_t + h_y(t_i,e(t_i))	 \\
				\hat{u}(t_i^+) &= u_t + h_u(t_i,e(t_i))
			\end{aligned} \;\;\;	\right\} \;\;\; t_i \in \mathcal{T},
\end{align}	
where $h_y: \mathbb{R} \times \mathbb{R}^{n_e} \rightarrow \mathbb{R}^{n_y}$ and $h_u: \mathbb{R} \times \mathbb{R}^{n_e} \rightarrow \mathbb{R}^{n_u}$ model measurement noise, channel distortion and the underlying scheduling protocol. The role of $h_y$ and $h_u$ is as follows. Suppose that the NCS has $l$ links. Accordingly, the error vector $e$ can be partitioned as $e:=(e_1,\ldots,e_l)$. In order to avoid cumbersome indices, let us assume that each NCS link is characterized by its own delay. Hence, there are merely $l$ (rather than $n_y+n_u$) different delays $d_i: \mathbb{R} \rightarrow \mathbb{R}_+$ in (\ref{sensing error}). Besides the already introduced upper bound $d$ on $d_i(t)$'s, we assume that $d_i(t)$'s are differentiable with bounded $|\dot{d}_i(t)|$. As orchestrated by (\ref{eq:jump equations}), if the $j^{\mathrm{th}}$ NCS link is granted access to the communication medium at some $t_i \in \mathcal{T}$, the corresponding components of $(\hat{y}(t_i),\hat{u}(t_i))$ jump to the received values. It is to be noted that all other components of $(\hat{y}(t_i),\hat{u}(t_i))$ remain unaltered. Consequently, the related components of $e(t_i)$ reset to the noise $\nu_j(t_i)$ present in the received data, i.e.,
\begin{align}
	e_j(t_i^+)=\nu_j(t_i),		\label{eq:noise}
\end{align}
and we assume that 
\begin{align*}
	\sup_{t \in \mathbb{R}, j \in \{1,\ldots,l\}} \|\nu_j(t)\| = K_{\nu}.
\end{align*}
Noise $\nu_j(t_i)$, which is embedded in $h_y$ and $h_u$, models any discrepancy between the received values and their actual values at time $t_i-d_j(t)$ (when the $j^{\mathrm{th}}$ NCS link of $(y(t),u(t))$ was sampled). As already indicated, this discrepancy can be a consequence of measurement noise and channel distortion. We point out that $\nu_j$ has nothing to do with~$\omega_p$ nor~$\omega_c$. Observe that out-of-order packet arrivals, as a consequence of the time-varying delays, are allowed for.

In between transmissions, the values of $\hat{y}$ and $\hat{u}$ need not to be constant as in \cite{wpmhheemels2010}, but can be estimated in order to extend transmission intervals (consult \cite{testrada2010} for more). In other words, for each $t \in [t_0,\infty) \setminus \mathcal{T}$ we have
\begin{align}
	\dot{\hat{y}} &= \hat{f}_p \big( t,x_{p_t},x_{c_t},\hat{y}_t,\hat{u}_t,\omega_p,\omega_c \big),			\nonumber \\
	\dot{\hat{u}} &= \hat{f}_c \big( t,x_{p_t},x_{c_t},\hat{y}_t,\hat{u}_t,\omega_p,\omega_c \big),			\label{eq:no estimation}
\end{align}
where the translation operators $\hat{y}_t$ and $\hat{u}_t$ are with delay~$d$. The commonly used ZOH strategy is characterized by $\dot{\hat{y}} \equiv \textbf{0}_{n_y}$ and $\dot{\hat{u}} \equiv \textbf{0}_{n_u}$.

The following definition of UGES scheduling protocols is extracted from \cite{dnesic2004} and \cite{wpmhheemels2010}.
\begin{definition}	\label{def:UGES pro}
	Consider the noise-free setting, i.e., $K_{\nu}=0$. The protocol given by $h:=(h_y,h_u)$ is UGES if there exists a function $W : \mathbb{N}_0 \times \mathbb{R}^{n_e} \rightarrow \mathbb{R}_+$ such that $W(i,\cdot):\mathbb{R}^{n_e} \rightarrow \mathbb{R}_+$ is locally Lipschitz (and hence almost everywhere differentiable) for every $i \in \mathbb{N}_0$, and if there exist positive constants $\underline{a}$, $\overline{a}$ and $0 \leq \rho < 1$ such that
	\begin{enumerate}[(i)]
		\item $\underline{a}\|e\| \leq W(i,e) \leq \overline{a} \|e\|$, and
		\item $W(i+1,h(t_i,e)) \leq \rho W(i,e)$,
	\end{enumerate}
	for all $(i,e) \in \mathbb{N}_0 \times \mathbb{R}^{n_e}$.
\end{definition}

\noindent
Notice that, even though the delays could result from protocol arbitration, the delays are not a part of the UGES protocol definition \cite{dnesic2004,wpmhheemels2010}. In addition, $\mathcal{T}$ is not a part of the protocol, but rather a consequence, as it is yet to be designed. Commonly used UGES protocols are the Round Robin (RR) and Try-Once-Discard protocol (TOD) (consult \cite{dnesic2004,wpmhheemels2010,dchristmann2014}). The corresponding constants are $\underline{a}_{RR}=1$, $\overline{a}_{RR}=\sqrt{l}$, $\rho_{RR}=\sqrt{(l-1)/l}$ for RR and $\underline{a}_{TOD}=\overline{a}_{TOD}=1$, $\rho_{TOD}=\sqrt{(l-1)/l}$ for TOD. Explicit expressions of the noise-free $h(t,e)$ for RR and TOD are provided in \cite{dnesic2004}, but are not needed in the context of this article.

\noindent
The properties imposed on the NCS in Figure \ref{fig:NCS} are summarized in the following standing assumption.
\begin{assumption}	\label{asmp:standing}
	The jump times of the NCS links at the controller and plant end obey the underlying UGES scheduling protocol (characterized through $h$) and occur at transmission instants belonging to $\mathcal{T}:=\{t_1,t_2, \ldots, t_i, \ldots\}$, where $\varepsilon \leq t_{i+1} - t_{i} \leq \tau$ for each $i \in \mathbb{N}_0$ with $\varepsilon > 0$ arbitrarily small. The received data is corrupted by measurement noise and/or channel distortion (characterized through $h$ as well). In addition, each NCS link is characterized by the network-induced delay $d_i(t)$, $i \in \{1,\ldots,l\}$.
\end{assumption}

\noindent
The existence of a strictly positive $\tau$, and therefore the existence of $\varepsilon > 0$, is demonstrated in Remark \ref{rmk:zeno}.

A typical closed-loop system (\ref{eq:plant})-(\ref{eq:no estimation}) with continuous (yet delayed) information flows in all NCS links might be robustly stable (in the $\mathcal{L}_p$ sense according to (\ref{eq:Lp H})) only for some sets of $d_i(t)$, $i \in \{1,\ldots,l\}$. We refer to the family of such delay sets as the family of \textit{admissible delays} and denote it $\mathcal{D}$. Next, given some admissible delays $d_i(t)$, $i \in \{1,\ldots,l\}$, the maximal $\tau$ which renders $\mathcal{L}_p$-stability (with a desired gain) of the closed-loop system (\ref{eq:plant})-(\ref{eq:no estimation}) is called MATI and is denoted $\overline{\tau}$. We are now ready to state the main problem studied herein.
\begin{problem}		\label{prob:main problem}
		Given admissible delays $d_i(t)$, $i \in \{1,\ldots,l\}$, estimator (\ref{eq:no estimation}) and the UGES protocol of interest, determine the MATI $\overline{\tau}$ to update components of $(\hat{y}$,$\hat{u})$ such that the NCS (\ref{eq:plant})-(\ref{eq:no estimation}) is $\mathcal{L}_p$-stable with bias and a prespecified $\mathcal{L}_p$-gain for some $p \in [1,\infty]$.
\end{problem}

\begin{remark}			\label{rmk:false intuition}
Even though our intuition (together with the case studies provided herein and in \cite{dtolicCDC2014}) suggests that merely ``small enough" delays (including the zero delay) are admissible because the control performance impairs (i.e., the corresponding $\mathcal{L}_p$-gain increases) with increasing delays, this observation does not hold in general \cite{jkhale1977},\cite[Chapter 1.]{siniculescu2001},\cite{hsmith2011}. In fact, ``small" delays may destabilize some systems while ``large" delays might destabilize others. In addition, even a second order system with a single discrete delay might toggle between stability and instability as this delay is being decreased. Clearly, the family $\mathcal{D}$ needs to be specified on a case-by-case basis. Hence, despite the fact that the case studies presented herein and in \cite{dtolicCDC2014} yield MATIs that hold for all smaller time-invariant delays (including the zero delay) than the delays for which these MATIs are computed for, it would be erroneous to infer that this property holds in general.
\end{remark}

\section{Methodology} \label{sec:methodology}

Along the lines of \cite{dnesic2004}, we rewrite the closed-loop system (\ref{eq:plant})-(\ref{eq:no estimation}) in the following form amenable for small-gain theorem (see \cite[Chapter 5]{IEEEhowto:hkhalil}) analyses:
\begin{subequations}	\label{eq:NCS system}
\begin{align}
		&\left.
		\begin{aligned}
		x(t^+) &= x(t)	\\
		e(t^+) &= h(t,e(t))
		\end{aligned}	\;\; \right\} \;\;\; t \in \mathcal{T}		\label{eq:NCS system jump}\\		&\left.
		\begin{aligned}
		\dot{x}  &= f(t,x_t,e,\omega)	\\
		\dot{e}  &= g(t,x_t,e_t,\omega_t)
		\end{aligned} \;\; \right\} \;\; \mbox{ otherwise}, \label{eq:NCS system flow}
\end{align}
\end{subequations}
where $x:=(x_p,x_c)$, $\omega:=(\omega_p,\omega_c)$, and functions $f$, $g$ and $h$ are given by (\ref{eq:nesic 1}) and (\ref{eq:nesic 2}). We assume enough regularity on $f$ and $g$ to guarantee existence of the solutions on the interval of interest \cite[Chapter 3]{ghballinger1999}. Observe that differentiability of $d_i(t)$'s and boundedness of $|\dot{d}_i(t)|$ play an important role in attaining regularity of $g$. For the sake of simplicity, our notation does not explicitly distinguish between translation operators with delays $d_p$, $d_c$, $d$ or $2 d$ in (\ref{eq:nesic 1}), (\ref{eq:nesic 2}) and in what follows. In this regard, we point out that the operators $x_{p_t}$ and $x_{c_t}$ are with delays $d_p$ and $d_c$, respectively, the operators $g_{p_t}$ and $g_{c_t}$ within $\hat{f}_p$ and $\hat{f}_c$ are with delay $2 d$ while all other operators are with delay $d$. In what follows we also use $\overline{d}:=2d+\max\{d_p,d_c\}$, which is the maximum value of all delay phenomena in (\ref{eq:nesic 2}).

\begin{figure*}
\begin{align}
	f&(t,x_t,e,\omega) \overset{\textstyle{\textrm{ (\ref{eq:plant}),(\ref{eq:controller}) }}}{:=}	\begin{bmatrix}
														f_p \big(t,x_{p_t},\underbrace{g_{c_t}(t,x_{c_t})+e_u(t)}_{{\scriptstyle = \hat{u}(t) \textrm{ using (\ref{eq:controller}) and (\ref{sensing error})} }},\omega_p(t)\big)		\\
														f_c \big(t,x_{c_t},\underbrace{g_{p_t}(t,x_{p_t})+e_y(t)}_{{\scriptstyle = \hat{y}(t) \textrm{ using (\ref{eq:plant}) and (\ref{sensing error})} }},\omega_c(t)\big)
												\end{bmatrix} =: \begin{bmatrix}
																							f_1(t,x_t,e,\omega)		\\
																							f_2(t,x_t,e,\omega)
																					\end{bmatrix}; \qquad 	h(t,e(t)) := 	\begin{bmatrix}
																																						h_y \big(t,e(t)\big)			\\
																																						h_u \big(t,e(t)\big)
																																				\end{bmatrix}			\label{eq:nesic 1}		\\
	g&(t,x_t,e_t,\omega_t) \overset{\textstyle{\textrm{ (\ref{sensing error}) }}}{:=}
						\begin{bmatrix}
												\underbrace{
												\begin{smallmatrix}
												\hat{f}_p \big(t,x_{p_t},x_{c_t},g_{p_t}(t,x_{p_t})+e_{y_t},g_{c_t}(t,x_{c_t})+e_{u_t},\omega(t) \big)
												\end{smallmatrix}
												}_{{\scriptstyle \textrm{ model-based estimator (\ref{eq:no estimation})}}} \underbrace{ 
												\begin{smallmatrix}
																		- \big(\frac{\partial g_p}{\partial t}\big)_t  (t,x_{p_t})- \big(\frac{\partial g_p}{\partial x_p}\big)_t (t,x_{p_t}) f_{1_t}(t,x_t,e,\omega)
												\end{smallmatrix}
												}_{ \scriptstyle{= -\dot{y}_t \textrm{ using (\ref{eq:plant}) and (\ref{eq:nesic 1})} } } 	\\
												\overbrace{
												\begin{smallmatrix}
												\hat{f}_c \big(t,x_{p_t},x_{c_t},g_{p_t}(t,x_{p_t})+e_{y_t},g_{c_t}(t,x_{c_t})+e_{u_t},\omega(t) \big)
												\end{smallmatrix}
												} \underbrace{
												\begin{smallmatrix}
												- \big(\frac{\partial g_c}{\partial t}\big)_t (t,x_{c_t})- \big(\frac{\partial g_c}{\partial x_c}\big)_t (t,x_{c_t}) f_{2_t}(t,x_t,e,\omega)
												\end{smallmatrix}
												}_{ \scriptstyle{= -\dot{u}_t \textrm{ using (\ref{eq:controller}) and (\ref{eq:nesic 1})} } }
										 \end{bmatrix}			\label{eq:nesic 2} \\ \midrule	\nonumber
\end{align}
\end{figure*}

For future reference, the delayed dynamics 
\begin{subequations}		\label{eq:Sigman}
\begin{align}
		&\left.
		\begin{aligned}
		x(t^+) &= x(t)	\\
		\end{aligned}	\qquad \quad \right\} \;\; t \in \mathcal{T}	\\
		&\left.
		\begin{aligned}
		\dot{x}  &= f(t,x_t,e,\omega)	
		\end{aligned} \; \right\} \; \mbox{ otherwise,}\ 
		\end{align}
\end{subequations}
are termed the {\em nominal system} $\Sigma_n$, and the impulsive delayed dynamics
\begin{subequations}		\label{eq:Sigmae}
\begin{align}
		&\left.
		\begin{aligned}
		e(t^+) &= h(t,e(t))
		\end{aligned}	\right\} \;\;\; t \in \mathcal{T}	\\
		&\left.
		\begin{aligned}
		\dot{e}  &= g(t,x_t,e_t,\omega_t)
		\end{aligned}   \right\} \; \mbox{ otherwise,} \
		\end{align}
\end{subequations}
are termed the {\em error system} $\Sigma_e$. Observe that $\Sigma_n$ contains delays, but does not depend on $h$ nor ${\cal T}$ as seen from (\ref{eq:Sigman}). Instead, $h$ and ${\cal T}$ constitute the error subsystem $\Sigma_e$ as seen from~(\ref{eq:Sigmae}).

The remainder of our methodology interconnects $\Sigma_n$ and $\Sigma_e$ using appropriate outputs. Basically, $W(i,e)$ from Definition \ref{def:UGES pro} is the output of $\Sigma_e$ while the output of $\Sigma_n$, denoted $H(x_t,\omega_t)$, is obtained from $g(t,x_t,e_t,\omega_t)$ and $W(i,e)$ as specified in Section \ref{sec:application}. Notice that the outputs $H(x_t,\omega_t)$ and $W(i,e)$ are auxiliary signals used to interconnect $\Sigma_n$ and $\Sigma_e$ and solve Problem \ref{prob:main problem}, but do not exist physically. Subsequently, the small-gain theorem is employed to infer $\mathcal{L}_p$-stability with bias. Proofs of the upcoming results are in the Appendix.

\subsection{$\mathcal{L}_p$-Stability with Bias of Impulsive Delayed LTI Systems}		\label{sec:Lp stab}

Before invoking the small-gain theorem in the upcoming subsection, let us establish conditions on the transmission interval $\tau$ and delay $d(t)$ that yield $\mathcal{L}_p$-stability with bias for a class of impulsive delayed LTI systems. Clearly, the results of this subsection are later on applied towards achieving $\mathcal{L}_p$-stability with bias and an appropriate $\mathcal{L}_p$-gain of $\Sigma_e$.

Consider the following class of impulsive delayed LTI system
\begin{subequations}	\label{eq:LTI}
	\begin{align}
		\dot{\xi}(t) &= a \xi(t-d(t)) + \tilde{u}(t),	 \qquad	t \notin \mathcal{T}	\label{eq:LTI flow} \\
		\xi(t^+) &= c \xi(t) + \tilde{\nu}(t),	\qquad \qquad \quad	t \in \mathcal{T},		\label{eq:LTI jump}
	\end{align}
\end{subequations}
where $a \in \mathbb{R}$ and $c \in (-1,1)$, initialized with some $\xi_{t_0} \in PC([-\breve{d},0], \mathbb{R})$. In addition, $d(t)$ is a continuous function upper bounded by $\breve{d}$ while $\tilde{u},\tilde{\nu}: \mathbb{R} \rightarrow \mathbb{R}$ denote external inputs and $\tilde{\nu} \in \mathcal{L}_{\infty}$.

\begin{lemma}		\label{lem:UGES}
	Assume $\tilde{u} \equiv 0$, $\tilde{\nu} \equiv 0$ and consider a positive constant $r$. In addition, let $\lambda_1:=\frac{a^2}{r}$, and $\lambda_2:=c^2$ for $c \neq 0$ or merely $\lambda_2 \in (0,1)$ for $c = 0$. If there exist constants $\lambda > 0$, $M > 1$ such that the conditions
	\begin{enumerate}[(I)]
		\item	$\tau \big( \lambda + r + \lambda_1 M e^{-\lambda \tau} \big) < \ln M$, and
		\item $\tau \big( \lambda + r + \frac{\lambda_1}{\lambda_2} e^{\lambda \breve{d}} \big) < -\ln \lambda_2$
	\end{enumerate}
hold, then the system (\ref{eq:LTI}) is UGES and $\|\xi(t)\| \leq \sqrt{M} \|\xi_{t_0}\| e^{-\frac{\lambda}{2} (t-t_0)}$ for all $t \geq t_0$.
\end{lemma}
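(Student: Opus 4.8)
The plan is to run a Razumikhin-type argument on the scalar Lyapunov function $V(\xi):=\xi^2$ and to establish the weighted bound $V(t)\le M\|\xi_{t_0}\|^2 e^{-\lambda(t-t_0)}$ for all $t\ge t_0$; taking square roots then yields $\|\xi(t)\|\le \sqrt{M}\|\xi_{t_0}\| e^{-\frac{\lambda}{2}(t-t_0)}$ and hence UGES in the sense of Definition~\ref{def:UGES}. First I would differentiate $V$ along the flow~(\ref{eq:LTI flow}) with $\tilde u\equiv 0$ and use Young's inequality $2a\,\xi(t)\xi(t-d(t))\le r\,\xi(t)^2+\tfrac{a^2}{r}\xi(t-d(t))^2$ to split the cross term, obtaining
\[
	\dot V(t)\le r\,V(t)+\lambda_1 V(t-d(t)),\qquad t\notin\mathcal T .
\]
Across a jump~(\ref{eq:LTI jump}) with $\tilde\nu\equiv 0$ I get $V(t^+)=c^2V(t)=\lambda_2 V(t)$ when $c\neq 0$, whereas for $c=0$ one has $V(t^+)=0\le \lambda_2 V(t)$ for any admissible $\lambda_2\in(0,1)$; this is exactly where the two definitions of $\lambda_2$ enter.

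Second, I would prove the envelope bound by induction over the inter-transmission intervals $(t_i,t_{i+1}]$ (equivalently, by contradiction at the first instant the bound is violated). The base case on $[t_0-\breve d,t_0]$ holds with a strict margin because $M>1$ and $\|\xi(t)\|\le\|\xi_{t_0}\|$ there. The essential point is that a purely pointwise Razumikhin estimate cannot close the argument: the flow coefficient $r$ is \emph{positive}, so $V$ genuinely grows between jumps, and the contraction must be supplied by the stabilizing jumps $\lambda_2<1$. Accordingly, I would integrate the flow inequality over one interval (length at most $\tau$) and compose it with the jump contraction, showing that the net factor over a complete interval keeps $V$ beneath the decaying envelope $M\|\xi_{t_0}\|^2 e^{-\lambda(t-t_0)}$, which therefore propagates from one interval to the next.

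The two hypotheses~(I) and~(II) arise from the two possible positions of the delayed argument $t-d(t)$ relative to the transmissions. When $t-d(t)$ lies in the same interval as $t$ (the small-delay situation, no intervening jump), I would bound $V(t-d(t))$ by the current envelope value; after weighting this contributes the term $\lambda_1 M e^{-\lambda\tau}$ and forces the per-interval growth to remain within the headroom $M$, i.e.\ condition~(I). When $t-d(t)$ lies in an earlier interval — unavoidable in the large-delay regime $\breve d>\tau$, where the delayed state reaches across one or more jumps — relating the pre-jump value back to the current interval costs a factor $1/\lambda_2$ while the delay itself costs $e^{\lambda\breve d}$, producing the term $\tfrac{\lambda_1}{\lambda_2}e^{\lambda\breve d}$ and the requirement that the net per-interval factor not exceed $\lambda_2^{-1}$, i.e.\ condition~(II).

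The main obstacle is precisely the large-delay regime governed by condition~(II): since $\breve d$ may exceed $\tau$, the Razumikhin comparison must track values of $V$ that predate intervening jumps, so continuous growth and discrete contraction cannot be balanced inside a single interval and must instead be carried consistently through the induction. Making the bookkeeping of which jumps separate $t-d(t)$ from $t$ rigorous — together with the $\sup$-norm over $[-\breve d,0]$ and the degenerate jump $c=0$ — is the delicate part. Once the envelope is shown to be invariant under both cases simultaneously, UGES and the stated estimate $\|\xi(t)\|\le\sqrt{M}\|\xi_{t_0}\|e^{-\frac{\lambda}{2}(t-t_0)}$ follow at once.
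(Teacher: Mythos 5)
Your setup is exactly the paper's: the quadratic Lyapunov function, Young's inequality giving $\dot V\le r\,V(t)+\lambda_1 V(t-d(t))$ off $\mathcal T$ and $V(t^+)\le\lambda_2 V(t)$ on $\mathcal T$, and an interval-by-interval induction closed by a Razumikhin/contradiction argument. The gap is in where you let conditions (I) and (II) enter. You assign (I) to the case ``$t-d(t)$ lies in the same interval as $t$'' and (II) to the case ``$t-d(t)$ lies behind one or more jumps.'' In the paper the split is by \emph{interval index}, not by the position of the delayed argument: (I) is used only on the first interval $[t_0,t_1)$, where the floor of the contradiction argument is the initial level $\|\xi_{t_0}\|^2$ and the ceiling is $M\|\xi_{t_0}\|^2e^{-\lambda(t_1-t_0)}$, so a headroom of order $\ln M$ genuinely exists (note that there the delayed argument typically points \emph{outside} the current interval, into the initial segment, yet (I) is what applies); (II) is used on \emph{every} later interval $[t_m,t_{m+1})$, with the single worst-case Razumikhin factor $e^{\lambda\breve d}/\lambda_2$ covering both positions of $t-d(t)$ at once.

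This misassignment is not cosmetic: with your bookkeeping the induction does not close. After any jump, the post-jump value satisfies $V(t_m^+)\le\lambda_2\, M\|\xi_{t_0}\|^2e^{-\lambda(t_m-t_0)}$, so the only headroom available on a non-first interval is the factor $1/\lambda_2$ bought by the jump --- not the factor $M$. Condition (I) only caps the per-interval growth by (roughly) $Me^{-\lambda\tau}$. Take, e.g., $\lambda_2=0.9$, $M=2$ and $\tau$ small: both (I) and (II) hold, yet $M\lambda_2=1.8>1$. On a later interval whose delayed argument stays inside that interval, your rule invokes (I) and therefore only guarantees
\begin{align*}
V(t)\;\le\;\lambda_2\,\big(Me^{-\lambda\tau}\big)\,e^{\lambda(t-t_m)}\,M\|\xi_{t_0}\|^2e^{-\lambda(t-t_0)}\;\le\;\lambda_2 M\cdot M\|\xi_{t_0}\|^2e^{-\lambda(t-t_0)},
\end{align*}
a bound that exceeds the envelope by the factor $\lambda_2 M>1$; hence the induction hypothesis is not propagated. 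The repair is the paper's assignment: reserve (I) for the base case $[t_0,t_1)$, and on all subsequent intervals bound $V(t+s)$ for every $s\in[-\breve d,0]$ --- whether or not $t+s$ predates $t_m$ --- by the single factor $e^{\lambda\breve d}/\lambda_2$ relative to $V(t)$, so that (II), whose right-hand side $-\ln\lambda_2$ matches the actual post-jump headroom, closes every induction step regardless of where $t-d(t)$ lands.
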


\noindent
The previous lemma, combined with the work presented in \cite{aanokhin1995}, results in the following theorem.
\begin{theorem}		\label{tm:Lp}
	Suppose that the system given by (\ref{eq:LTI}) is UGES with constants $\lambda > 0$ and $M > 1$ and that $\sup_{t \in \mathbb{R}} \|\tilde{\nu}(t)\| \leq \tilde{K}_{\nu}$. Then, the system (\ref{eq:LTI}) is $\mathcal{L}_p$-stable with bias $\frac{\tilde{K}_{\nu} \sqrt{M}}{e^{\frac{\lambda \varepsilon}{2}} - 1}$ from $\tilde{u}$ to $\xi$ and with gain $\frac{2}{\lambda} \sqrt{M}$ for each $p \in [1,\infty]$.
\end{theorem}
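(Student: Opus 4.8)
The plan is to exploit the linearity of (\ref{eq:LTI}): by superposition, decompose each solution $\xi$ into three parts---the free response driven by the initial history $\xi_{t_0}$, the response to the continuous input $\tilde u$, and the response to the jump perturbations $\tilde\nu$---and then bound the $\mathcal{L}_p$-norm of each part separately, so that the three contributions match the three terms in Definition \ref{def:stability bias}.

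First I would invoke a variation-of-constants representation for impulsive delayed linear systems (this is exactly the role of \cite{aanokhin1995}) to write $\xi(t)$ in terms of the Cauchy kernel $X(t,s)$ of the homogeneous, noise-free system, schematically as a free term, a convolution-type integral $\int_{t_0}^t X(t,s)\tilde u(s)\,\mathrm{d}s$, and a discrete sum $\sum_{t_0<t_i\le t} X(t,t_i^+)\tilde\nu(t_i)$. The crucial input from Lemma \ref{lem:UGES} is that UGES transfers to the kernel as the pointwise bound $|X(t,s)|\le\sqrt M\,e^{-\frac\lambda2(t-s)}$; since the system is linear, this follows by applying the estimate $\|\xi(t)\|\le\sqrt M\|\xi_{t_0}\|e^{-\frac\lambda2(t-t_0)}$ to the difference of two solutions, noting that injecting $\tilde\nu(t_i)$ at $t_i^+$ perturbs the history segment in sup-norm by exactly $|\tilde\nu(t_i)|$.

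With this bound in hand the three estimates are routine. The free term is dominated by $\sqrt M\|\xi_{t_0}\|e^{-\frac\lambda2(t-t_0)}$, whose $\mathcal{L}_p[t_0,t]$-norm is finite and bounded by $K\|\xi_{t_0}\|$ for a constant $K$ depending only on $\lambda,M,p$ (for $p<\infty$ integrate the exponential, for $p=\infty$ take the supremum). The input term is dominated by the convolution of $|\tilde u|$ with the kernel $\kappa(\sigma):=\sqrt M e^{-\frac\lambda2\sigma}\mathbf{1}_{\{\sigma\ge0\}}$; Young's convolution inequality then gives the $\mathcal{L}_p$-bound $\|\kappa\|_1\,\|\tilde u[t_0,t]\|_p$ with $\|\kappa\|_1=\frac2\lambda\sqrt M$, which is precisely the claimed gain $\gamma=\frac2\lambda\sqrt M$ and holds uniformly in $p\in[1,\infty]$. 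For the jump term, the minimum inter-transmission time $\varepsilon$ from Assumption \ref{asmp:standing} spaces the exponents $t-t_i$ at least $\varepsilon$ apart, so $\sum_{t_i\le t}\sqrt M\tilde K_\nu\, e^{-\frac\lambda2(t-t_i)}$ is dominated by a geometric series with ratio $e^{-\lambda\varepsilon/2}$; summing it yields the constant pointwise bound $b:=\tfrac{\sqrt M\,\tilde K_\nu}{e^{\lambda\varepsilon/2}-1}$, whose $\mathcal{L}_p[t_0,t]$-norm is exactly $\|b[t_0,t]\|_p$. Adding the three pieces by the triangle inequality for $\|\cdot\|_p$ reproduces the inequality of Definition \ref{def:stability bias}.

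The main obstacle is the very first step: for a delayed system the elementary ODE variation-of-constants formula does not apply, and one must justify both the integral/sum representation and the claim that the exponential decay rate of the homogeneous solution is inherited by the Cauchy kernel uniformly in $(t,s)$---this is where \cite{aanokhin1995} is essential, and care is needed because the kernel is nonautonomous (the delay $d(t)$ and the transmission set $\mathcal{T}$ vary in time), so it is \emph{not} a function of $t-s$ alone and can only be dominated by the autonomous convolution kernel $\kappa$ before Young's inequality is applied. A secondary technical point is to treat $p=\infty$ in parallel with $p\in[1,\infty)$ so that the single gain $\frac2\lambda\sqrt M$ and the single bias $b$ serve all admissible $p$ simultaneously.
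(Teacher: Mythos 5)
Your proof is correct and follows the paper's skeleton---Anokhin's variation-of-constants representation, the bound $\|\Phi(t,s)\|\le\sqrt{M}e^{-\frac{\lambda}{2}(t-s)}$ inherited from UGES, a convolution estimate for the $\tilde{u}$-contribution, and a geometric series over the $\varepsilon$-separated jumps for the $\tilde{\nu}$-contribution---but it departs from the paper in two ways, both of which are simplifications. First, you lump the entire initial-history contribution into a single free response of the homogeneous system and bound it directly by the UGES estimate; the paper instead keeps Anokhin's two separate terms, $\Phi(t,t_0)\xi(t_0)$ and the history-correction integral $\int_{t_0}^{t}\Phi(t,s)\,a\,\xi_s(-d(t))\,\mathrm{d}s$, and bounds each one, which is why its constant $K$ carries the extra factor $\bigl(1+|a|\tfrac{2}{\lambda}(e^{\breve{d}\lambda/2}-1)\bigr)$; since Definition \ref{def:stability bias} only requires \emph{some} $K\ge 0$, your cleaner constant is equally valid. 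Second, for the input term you apply Young's inequality once with exponents $(p,1,p)$, i.e. $\|\kappa*|\tilde{u}|\|_p\le\|\kappa\|_1\|\tilde{u}\|_p$, which delivers the gain $\frac{2}{\lambda}\sqrt{M}$ for every $p\in[1,\infty]$ in one stroke; the paper extracts only the endpoint cases $p=1$ and $p=\infty$ from Young and then invokes the Riesz--Thorin interpolation theorem to cover the intermediate exponents, so your route makes Riesz--Thorin unnecessary. Your caution about the nonautonomous kernel (dominate by the autonomous kernel $\kappa$ \emph{before} convolving) is exactly what the paper does implicitly in its inequality (\ref{eq:convolution}). One blemish you share with the paper: a geometric series with ratio $e^{-\lambda\varepsilon/2}$ whose leading term corresponds to the most recent jump (for which $t-t_i$ may be arbitrarily close to $0$) sums to $\frac{1}{1-e^{-\lambda\varepsilon/2}}=1+\frac{1}{e^{\lambda\varepsilon/2}-1}$, not $\frac{1}{e^{\lambda\varepsilon/2}-1}$; the stated bias is obtained only by starting the sum at $k=1$, i.e. by tacitly assuming the latest transmission is at least $\varepsilon$ old, so strictly speaking the bias constant in both your argument and the paper's should be enlarged by the additive term $\tilde{K}_{\nu}\sqrt{M}$.
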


\subsection{Obtaining MATIs via the Small-Gain Theorem}		\label{sec:application}

We are now ready to state and prove the main result of this article. Essentially, we interconnect $\Sigma_n$ and $\Sigma_e$ via suitable outputs (i.e., $H(x_t,\omega_t)$ and $W(i,e)$, respectively), impose the small-gain condition and invoke the small-gain theorem.
\begin{theorem}		\label{thm:main}
		Suppose the underlying UGES protocol, $d_1(t),\ldots,d_l(t)$ and $K_{\nu} \geq 0$ are given. In addition, assume that
			\begin{enumerate}[(a)]
							\item there exists a continuous function $H : PC([-\overline{d},0],\mathbb{R}^{n_x}) \times PC([-d,0],\mathbb{R}^{n_{\omega}}) \rightarrow \mathbb{R}^m$ such that the system $\Sigma_n$ given by (\ref{eq:Sigman}) is $\mathcal{L}_p$-stable from $(W,\omega)$ to $H(x_t,\omega_t)$ for some $p \in [1,\infty]$, i.e., there exist $K_H,\gamma_H \geq 0$ such that
							\begin{align}
								\|H[t,t_0]\|_p \leq& K_H \|x_{t_0}\| + \gamma_H \|(W,\omega)[t,t_0]\|_p,		\label{eq:Lp H}
							\end{align}
							for all $t \geq t_0$, and
				\item there exists $L \geq 0$ and $d:\mathbb{R} \rightarrow \mathbb{R}_+$, $\sup_{t \in \mathbb{R}} d(t) = \breve{d}$, such that for almost all $t \geq t_0$, almost all $e \in \mathbb{R}^{n_e}$ and for all $(i,x_t,\omega_t) \in \mathbb{N}_0 \times PC([-\overline{d},0],\mathbb{R}^{n_x}) \times PC([-d,0],\mathbb{R}^{n_{\omega}})$ it holds that
							\begin{align}
								\Big\langle \frac{\partial W(i,e)}{\partial e}, g(t,x_t,e_t,\omega_t) \Big\rangle \leq& L W(i,e(t-d(t))) + \nonumber \\
																																																			& + \|H(x_t,\omega_t)\|.		\label{eq:complex}
							\end{align}
			\end{enumerate} 
				Then, the NCS (\ref{eq:NCS system}) is $\mathcal{L}_p$-stable with bias from $\omega$ to $(H,e)$ for each $\tau$ for which there exist $M > 1$ and $\lambda > 0$ satisfying (I), (II) and $\frac{2}{\lambda} \sqrt{M} \gamma_H < 1$ with parameters $a=\frac{\overline{a}}{\underline{a}} L$ and $c=\rho$.
\end{theorem}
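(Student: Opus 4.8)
The plan is to treat $W(i,e)$ as the output of the error subsystem $\Sigma_e$, show that the scalar signal $t \mapsto W(i(t),e(t))$ (with $i(t)$ the jump counter) obeys an impulsive delayed differential inequality of exactly the form \eqref{eq:LTI}, dominate it by the genuine LTI solution $\xi$ via a comparison argument, and then read off the input-to-output gain of $\Sigma_e$ from Theorem \ref{tm:Lp}. With the gain of $\Sigma_n$ supplied by hypothesis (a), the closed loop is handled by the small-gain theorem, whose contraction condition is precisely $\frac{2}{\lambda}\sqrt{M}\,\gamma_H < 1$.

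First I would derive the scalar inequality. Abbreviate $\mathcal{W}(t):=W(i(t),e(t))$, which is nonnegative and, on each interval between consecutive transmission instants, absolutely continuous since each $W(i,\cdot)$ is locally Lipschitz. On $(t_i,t_{i+1})$, hypothesis (b) gives
\begin{align}
\dot{\mathcal{W}}(t) = \Big\langle \tfrac{\partial W}{\partial e}, g(t,x_t,e_t,\omega_t)\Big\rangle \leq L\, W\big(i(t),e(t-d(t))\big) + \|H(x_t,\omega_t)\|. \nonumber
\end{align}
The term $W(i(t),e(t-d(t)))$ carries the current index but a delayed state; the sandwich bounds $\underline{a}\|e\| \leq W(i,e) \leq \overline{a}\|e\|$ convert it into the delayed value of $\mathcal{W}$ itself, since
\begin{align}
W\big(i(t),e(t-d(t))\big) \leq \overline{a}\,\|e(t-d(t))\| \leq \frac{\overline{a}}{\underline{a}}\, W\big(i(t-d(t)),e(t-d(t))\big) = \frac{\overline{a}}{\underline{a}}\,\mathcal{W}(t-d(t)). \nonumber
\end{align}
Hence $\dot{\mathcal{W}}(t) \leq a\,\mathcal{W}(t-d(t)) + \tilde{u}(t)$ with $a = \frac{\overline{a}}{\underline{a}}L$ and $\tilde{u}(t):=\|H(x_t,\omega_t)\|$, matching \eqref{eq:LTI flow}. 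At a transmission instant the UGES property (ii), together with the Lipschitz dependence of $W$ on the noise reset \eqref{eq:noise}, yields $\mathcal{W}(t_i^+) \leq \rho\,\mathcal{W}(t_i) + \tilde{\nu}(t_i)$ with $\sup_t \|\tilde{\nu}(t)\| \leq \tilde{K}_\nu = \overline{a}K_\nu$, matching \eqref{eq:LTI jump} with $c = \rho$.

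The main obstacle is the comparison step: I must show that $\mathcal{W}(t)$ is dominated for all $t \geq t_0$ by the solution $\xi(t)$ of \eqref{eq:LTI} driven by the same $\tilde{u}$ and $\tilde{\nu}$ and initialized so that $\xi_{t_0} \geq \mathcal{W}$ on $[-\breve{d},0]$ (e.g. $\xi_{t_0} \equiv \overline{a}\|e_{t_0}\|$). This is delicate because $\mathcal{W}$ is only piecewise absolutely continuous, the delayed comparison must propagate across the jumps \eqref{eq:LTI jump}, and the sign of $a$ is not fixed; the natural route is a Halanay/Razumikhin-type comparison lemma for impulsive delayed inequalities with nonnegative forcing, ensuring both $\xi \geq 0$ and $\mathcal{W} \leq \xi$ pointwise. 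Conditions (I)--(II) render \eqref{eq:LTI} UGES by Lemma \ref{lem:UGES}, so Theorem \ref{tm:Lp} applies and, using $\|\mathcal{W}[t_0,t]\|_p \leq \|\xi[t_0,t]\|_p$, gives
\begin{align}
\|W[t_0,t]\|_p \leq K_e\|e_{t_0}\| + \tfrac{2}{\lambda}\sqrt{M}\,\|H[t_0,t]\|_p + \big\|b_e[t_0,t]\big\|_p, \nonumber
\end{align}
i.e. the gain of $\Sigma_e$ from $H$ to $W$ is $\gamma_e := \frac{2}{\lambda}\sqrt{M}$ with bias $b_e = \frac{\tilde{K}_\nu\sqrt{M}}{e^{\lambda\varepsilon/2}-1}$.

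Finally I would close the loop. Substituting this $\Sigma_e$ estimate into hypothesis (a) and using $\|(W,\omega)[t_0,t]\|_p \leq \|W[t_0,t]\|_p + \|\omega[t_0,t]\|_p$ gives
\begin{align}
(1 - \gamma_H\gamma_e)\,\|H[t_0,t]\|_p \leq K_H\|x_{t_0}\| + \gamma_H K_e\|e_{t_0}\| + \gamma_H\|\omega[t_0,t]\|_p + \gamma_H\big\|b_e[t_0,t]\big\|_p. \nonumber
\end{align}
Under the small-gain condition $\gamma_H\gamma_e = \frac{2}{\lambda}\sqrt{M}\,\gamma_H < 1$ I divide to obtain an $\mathcal{L}_p$-bound on $H$ with bias; reinserting it into the $\Sigma_e$ estimate bounds $W$, and since $\|e[t_0,t]\|_p \leq \frac{1}{\underline{a}}\|W[t_0,t]\|_p$, this bounds $e$ as well. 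Collecting the two estimates yields $\mathcal{L}_p$-stability with bias from $\omega$ to $(H,e)$, as claimed.
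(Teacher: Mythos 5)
Your proposal is correct and takes essentially the same route as the paper's own proof: both convert hypothesis (b) into a scalar impulsive delayed inequality for $W$ along trajectories (using the sandwich bounds of Definition \ref{def:UGES pro} to get $a=\frac{\overline{a}}{\underline{a}}L$ on the flow and property (ii) plus the noise bound $\overline{a}K_{\nu}$ at jumps), dominate $W$ by the LTI system (\ref{eq:LTI}) via a comparison lemma, apply Lemma \ref{lem:UGES} and Theorem \ref{tm:Lp} to obtain the gain $\frac{2}{\lambda}\sqrt{M}$ with bias from $H$ to $W$, and then close the loop with the small-gain argument under $\frac{2}{\lambda}\sqrt{M}\gamma_H<1$. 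The only difference is bibliographic: where you invoke a generic Halanay/Razumikhin-type comparison lemma for the domination step, the paper cites a specific comparison lemma for impulsive delayed systems (bounding the fundamental matrix of the $W$-inequality by that of (\ref{eq:LTI})), so your flagged ``main obstacle'' is resolved in the paper by that citation rather than by a new argument.
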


\begin{remark}	\label{rmk:main}
	According to Problem \ref{prob:main problem}, condition (a) requires the underlying delays to be admissible, i.e., $\{d_1(t),\ldots,d_l(t)\} \in \mathcal{D}$. Condition (a) implies that the nominal system (i.e., the closed-loop system) is robust with respect to intermittent information and disturbances. Besides $\mathcal{L}_p$-stability, typical robustness requirements encountered in the literature include Input-to-State Stability (ISS) and passivity \cite{dtolicIJRNC2014}. Condition (b) relates the current growth rate of $W(i,e)$ with its past values. As shown in Section \ref{sec:example}, all recommendations and suggestions from \cite{dnesic2004} and \cite{wpmhheemels2010} regarding how to obtain a suitable $W(i,e)$ readily apply because $W(i,e)$ characterizes the underlying UGES protocol (and not the plant-controller dynamics).
\end{remark}

\begin{remark} [Zeno-freeness]	\label{rmk:zeno}
	The left-hand sides of conditions (I) and (II) from Lemma \ref{lem:UGES} are nonnegative continuous functions of $\tau \geq 0$ and approach $\infty$ as $\tau \rightarrow \infty$. Also, these left-hand sides equal zero for $\tau=0$. Note that both sides of (I) and (II) are continuous in $\lambda$, $M$, $\lambda_1$, $\lambda_2$ and $\breve{d}$. Hence, for every $\lambda > 0$, $\lambda_1 \geq 0$, $M > 1$, $\lambda_2 \in (0,1)$ and $\breve{d} \geq 0$ there exists $\tau > 0$ such that (I) and (II) are satisfied. Finally, since $\frac{2}{\lambda} \sqrt{M}$ is continuous in $\lambda$ and $M$, we infer that for every finite $\gamma_H > 0$ there exists $\tau > 0$ such that $\frac{2}{\lambda} \sqrt{M} \gamma_H < 1$. In other words, for each admissible $d_i(t)$, $i \in \{1,\ldots,l\}$, the unwanted Zeno behavior is avoided and the proposed methodology does not yield continuous feedback that might be impossible to implement. Notice that each $\tau$ yielding $\frac{2}{\lambda} \sqrt{M} \gamma_H < 1$ is a candidate for~$\overline{\tau}$. Depending on $r$, $\lambda_2$, $\lambda$ and $M$, the maximal such $\tau$ is in fact MATI~$\overline{\tau}$.
\end{remark}

\begin{remark}		\label{rmk:no general 2}
	The right hand side of (\ref{eq:complex}) might not be descriptive enough for many problems of interest. In general, (\ref{eq:complex}) should be sought in the form $\Big\langle \frac{\partial W(i,e)}{\partial e}, g(t,x_t,e_t,\omega_t) \Big\rangle \leq \sum_{k=1}^{m} L_k W(i,e(t-\grave{d}_k(t))) + \|H(x_t,\omega_t)\|$, where $\grave{d}_k: \mathbb{R} \rightarrow \mathbb{R}_+$ and $m \geq 1$. As this general form leads to tedious computations (as evident from the proof of Lemma \ref{lem:UGES} in the Appendix), we postpone its consideration for the future. For the time being, one can intentionally delay the communicated signals in order to achieve a single discrete delay $d(t)$ in (\ref{eq:complex}). This idea is often found in the literature and can be accomplished via the Controller Area Network (CAN) protocol, time-stamping of data and introduction of buffers at receiver ends (refer to \cite{IEEEhowto:jhespanha} and references therein).
\end{remark}

\begin{remark}
Noisy measurements can be a consequence of quantization errors. According to \cite{nmartins2006}, feedback control prone to quantization errors cannot yield closed-loop systems with linear $\mathcal{L}_p$-gains. Hence, the bias term in the linear gain $\mathcal{L}_p$-stability with bias result of Theorem \ref{thm:main} cannot be removed without contradicting the points in \cite{nmartins2006}. Further investigations of quantized feedback are high on our future research agenda.
\end{remark}

\begin{remark}		\label{rmk:dropouts}
Let us consider the case of lossy communication channels. If there is an upper bound on the maximum number of successive dropouts, say $N_d \in \mathbb{N}$, simply use $\frac{\tau}{N_d}$ as the transmission interval in order for Theorem \ref{thm:main} to hold. Moreover, the transmission instants among NCS links need not to be (and often cannot be) synchronized. In this case, each NCS must transmit at a rate smaller than $\tau_{RR}$ (instead of $\tau_{RR} l$), where $\tau_{RR}$ is the MATI obtained for the RR protocol, in order to meet the prespecified performance requirements. Observe that this leads to asynchronous transmission protocols, which in turn increases the likelihood of packet collisions \cite{mmamduhiCDC2014}.
\end{remark}

\begin{corollary}		\label{cor:main}
	Assume that the conditions of Theorem \ref{thm:main} hold and that $x$ is $\mathcal{L}_p$-detectable from $(W,\omega,H)$. Then the NCS (\ref{eq:NCS system}) is $\mathcal{L}_p$-stable with bias from $\omega$ to $(x,e)$.
\end{corollary}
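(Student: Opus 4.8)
The plan is to treat this as a straightforward cascade argument: Theorem \ref{thm:main} already delivers an $\mathcal{L}_p$-bound on the pair $(H,e)$, and the only missing ingredient is a bound on the genuine state $x$, which is exactly what $\mathcal{L}_p$-detectability supplies. First I would invoke Theorem \ref{thm:main} to record, for every admissible solution of (\ref{eq:NCS system}) and every $t \geq t_0$, an estimate
\[
\|(H,e)[t_0,t]\|_p \leq K_1 \|\chi_{t_0}\| + \gamma_1 \|\omega[t_0,t]\|_p + \|b_1[t_0,t]\|_p,
\]
where $\chi_{t_0}$ collects the initial data of the NCS. Since each component of a stacked signal has an $\mathcal{L}_p$-norm no larger than the norm of the stack, this yields the very same bound (with the same constants) separately for $\|H[t_0,t]\|_p$ and for $\|e[t_0,t]\|_p$.

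Next I would convert the $W$-channel of the detectability hypothesis into the $e$-channel. Because $W(i,e)$ is the output of $\Sigma_e$ and, by Definition \ref{def:UGES pro}(i), satisfies $\underline{a}\|e\| \leq W(i,e) \leq \overline{a}\|e\|$, we obtain the pointwise bound $\|W(i,e(t))\| \leq \overline{a}\|e(t)\|$ and hence $\|W[t_0,t]\|_p \leq \overline{a}\|e[t_0,t]\|_p$. Feeding this into the detectability inequality
\[
\|x[t_0,t]\|_p \leq K_2\|\chi_{t_0}\| + \gamma_2\|(W,\omega,H)[t_0,t]\|_p + \gamma_2\|\omega[t_0,t]\|_p,
\]
after the decomposition $\|(W,\omega,H)[t_0,t]\|_p \leq \|W[t_0,t]\|_p + \|\omega[t_0,t]\|_p + \|H[t_0,t]\|_p$, together with the $\|H\|_p$ and $\|e\|_p$ bounds from the first step, turns the right-hand side into an expression involving only $\|\chi_{t_0}\|$, $\|\omega[t_0,t]\|_p$ and the bias $\|b_1[t_0,t]\|_p$. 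Absorbing the constants $\overline{a}$, $\gamma_1$, $\gamma_2$ into fresh $K_3$, $\gamma_3$ and a fresh bias $b_3$ gives a clean linear bound on $\|x[t_0,t]\|_p$.

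Finally I would reassemble the pair $(x,e)$. Using $\|(x,e)[t_0,t]\|_p \leq \|x[t_0,t]\|_p + \|e[t_0,t]\|_p$ (valid for every $p \in [1,\infty]$ by Minkowski's inequality, since $\|(x(s),e(s))\| \leq \|x(s)\| + \|e(s)\|$ pointwise), I would add the $x$-bound just obtained to the $e$-bound from the first step; collecting terms produces
\[
\|(x,e)[t_0,t]\|_p \leq K\|\chi_{t_0}\| + \gamma\|\omega[t_0,t]\|_p + \|b[t_0,t]\|_p,
\]
which is precisely $\mathcal{L}_p$-stability with bias from $\omega$ to $(x,e)$ in the sense of Definition \ref{def:stability bias}.

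I do not expect a deep obstacle here; the argument is a routine composition of two a priori estimates. The only points demanding care are the $\mathcal{L}_p$ bookkeeping — extracting single-component norms from the norm of a stacked signal, handling the bias term consistently across the regimes $p < \infty$ and $p = \infty$, and applying the detectability gain to $W$ only after dominating it by $\overline{a}\|e\|$ so that the closed chain $x \rightarrow W \rightarrow e \rightarrow (H,e)$ closes back onto quantities already controlled by Theorem \ref{thm:main}.
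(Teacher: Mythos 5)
Your proposal is correct and takes essentially the same approach as the paper: compose the $\mathcal{L}_p$-detectability estimate with the bounds delivered by Theorem \ref{thm:main}, using the UGES protocol property (i) (i.e., $\|W[t_0,t]\|_p \leq \overline{a}\|e[t_0,t]\|_p$) to close the chain, and then reassemble $(x,e)$ by the triangle inequality. The only cosmetic difference is that you invoke the theorem's conclusion (the closed-loop bound on $(H,e)$) as a black box, whereas the paper reuses the intermediate inequalities (\ref{eq:Lp H 2}) and (\ref{eq:Lp e}) from the theorem's proof; the substance and the resulting constants are the same up to bookkeeping.
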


\noindent
In the following proposition, we provide conditions that yield UGS and GAS of the interconnection $\Sigma_n$ and $\Sigma_e$. Recall that $\omega \equiv\textbf{0}_{n_{\omega}}$ and $K_{\nu}=0$ are the disturbance and noise settings, respectively, corresponding to UGS and GAS.

\begin{proposition}		\label{prop:UGAS}
	Assume that the interconnection of systems $\Sigma_n$ and $\Sigma_e$, given by (\ref{eq:Sigman}) and (\ref{eq:Sigmae}), is $\mathcal{L}_p$-stable from $\omega$ to $(x,e)$. If $p=\infty$, then this interconnection is UGS. When $p \in [1,\infty)$, assume that $f(t,x_t,e,\mathbf{0}_{n_{\omega}})$ and $g(t,x_t,e_t,\tilde{\mathbf{0}}_{n_{\omega}})$ are (locally) Lipschitz uniformly in $t$ as well as that $\|H(x_t,\tilde{\mathbf{0}}_{n_{\omega}})\| \rightarrow 0$ as $\|x_t\| \rightarrow 0$. Then, this interconnection is GAS. 
\end{proposition}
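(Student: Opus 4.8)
The plan is to treat the two cases separately, since $\mathcal{L}_\infty$-stability already controls the instantaneous norm whereas a finite $p$ controls only an integral. For $p=\infty$ I would argue directly. Writing $\chi:=(x,e)$ and setting $\omega\equiv\mathbf{0}_{n_{\omega}}$ with $K_\nu=0$ (the UGS/GAS setting), $\mathcal{L}_\infty$-stability gives, for every solution, $\|(x,e)[t_0,t]\|_\infty\le K\|\chi_{t_0}\|$ for all $t\ge t_0$. Since $\|\chi(s)\|\le\|(x,e)[t_0,t]\|_\infty$ for every $s\in[t_0,t]$, this yields $\|\chi(t)\|\le K\|\chi_{t_0}\|$ for all $t\ge t_0$. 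Choosing $\delta(\epsilon)=\epsilon/K$ (any $\delta$ if $K=0$) then meets Definition~\ref{def:UGS} with $\delta(\epsilon)\to\infty$ as $\epsilon\to\infty$, so the interconnection is UGS. Note that no decay can be extracted here, which is exactly why only UGS is claimed in this case.

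For $p\in[1,\infty)$ I would first extract from $\|(x,e)[t_0,t]\|_p\le K\|\chi_{t_0}\|$ that $\int_{t_0}^{\infty}\|(x(s),e(s))\|^p\,\mathrm{d}s\le(K\|\chi_{t_0}\|)^p<\infty$, i.e.\ $(x,e)\in\mathcal{L}_p[t_0,\infty)$. Stability would then follow from a reverse-Barbalat estimate: by the local Lipschitzness of $f$ and $g$ uniform in $t$, $\|\dot x\|$ and $\|\dot e\|$ are bounded along flows by a constant multiple of the running supremum of the state and its history, so any excursion of $\|x(t)\|$ (which is continuous, as $x$ does not jump) or of $\|e(t)\|$ between impulses above a level $R$ must persist for at least a fixed positive time, contributing at least a fixed fraction of $R^p$ to the bounded integral above; this caps the running supremum by a class-$\mathcal{K}$ function of $\|\chi_{t_0}\|$. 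The impulses in $e$ are absorbed using the UGES contraction $\|e(t_i^+)\|\le(\overline{a}/\underline{a})\rho\,\|e(t_i)\|$ implied by Definition~\ref{def:UGES pro}, so post-jump values are controlled by pre-jump (flow) values. This establishes Lyapunov stability of the origin.

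For attractivity, with the trajectory now bounded, $\dot x=f(t,x_t,e,\mathbf{0}_{n_{\omega}})$ is bounded, hence the continuous signal $x$ is uniformly continuous; since $x\in\mathcal{L}_p$, Barbalat's lemma (see \cite{IEEEhowto:hkhalil}) gives $x(t)\to 0$, so $\|x_t\|\to 0$ and, by hypothesis, $\|H(x_t,\tilde{\mathbf{0}}_{n_{\omega}})\|\to 0$. For $e$, between impulses $\dot e=g(t,x_t,e_t,\tilde{\mathbf{0}}_{n_{\omega}})$ is likewise bounded, the impulses obey the contraction above, and by Assumption~\ref{asmp:standing} they occur with inter-event times at least $\varepsilon>0$; since $e\in\mathcal{L}_p$, an impulsive variant of Barbalat's lemma (using the uniform dwell-time $\varepsilon$ to bound the number of jumps per unit time) yields $e(t)\to 0$. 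Equivalently, feeding $\|H(x_t,\tilde{\mathbf{0}}_{n_{\omega}})\|\to 0$ into (\ref{eq:complex}) and exploiting the UGES behaviour of $W$ from Lemma~\ref{lem:UGES} (with $a=(\overline{a}/\underline{a})L$, $c=\rho$) drives $W(i,e(t))\to 0$, whence $\|e(t)\|\le W(i,e(t))/\underline{a}\to 0$. Together with stability this gives GAS.

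The routine $p=\infty$ case aside, I expect the technical heart to be the finite-$p$ case, where integral ($\mathcal{L}_p$) decay must be upgraded to pointwise stability and attractivity. Two features obstruct a naive application of Barbalat: the distributed delays, which couple the bound on $\dot x,\dot e$ to the entire history $\|x_t\|,\|e_t\|$ rather than to the instantaneous state, and the jumps in $e$, which break continuity. The reverse-Barbalat boundedness estimate and the dwell-time-based impulsive Barbalat argument are therefore where the real work lies; the role of the three hypotheses ($\mathcal{L}_p$-stability, Lipschitzness uniform in $t$, and $\|H(x_t,\tilde{\mathbf{0}}_{n_{\omega}})\|\to 0$ as $\|x_t\|\to 0$) is precisely to supply, respectively, the summability, the bounded derivatives and uniform continuity, and the vanishing forcing that these two arguments consume.
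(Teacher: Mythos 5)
Your proposal is correct and follows essentially the same route as the paper: the direct $\mathcal{L}_\infty$ argument for UGS, the $p^{\mathrm{th}}$-power integral bound, a Sontag-style minimum-transit-time argument (your ``reverse-Barbalat'') for stability, a Barbalat/Teel-type argument for $x(t)\rightarrow 0$, and your ``equivalently'' route for $e(t)\rightarrow 0$ --- driving the $W$-comparison dynamics by $\|H\|\rightarrow 0$ and invoking the UGES/convolution estimate --- is precisely how the paper concludes. The only (shared) blemish concerns the jumps: the paper simply asserts $\|(x,e)(t^+)\|\leq\|(x,e)(t)\|$, while your factor $(\overline{a}/\underline{a})\rho$ need not be below one for a general UGES protocol, although it is for TOD and for the actual RR update map.
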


\section{Numerical Examples}	\label{sec:example}

\subsection{Constant Delays}

The following example is motivated by \cite[Example 2.2.]{jtsinias1991} and all the results are provided for $p=2$. Consider the following nonlinear delayed plant (compare with (\ref{eq:plant}))
\begin{align*}
	&\begin{bmatrix}
			\dot{x}_{p1}(t)	\\
			\dot{x}_{p2}(t)
	\end{bmatrix} =	\nonumber \\
	&\begin{bmatrix}
										\begin{smallmatrix}
										-0.5 x_{p1}(t) + x_{p2}(t) - 0.25x_{p1}(t) \sin \big(u(t) x_{p2}(t-d_{p1}) \big)	\\
										x_{p1}(t) \sin \big(u(t) x_{p2}(t-d_{p1}) \big) + 1.7x_{p2}(t-d_{p2}) + u(t) - x_{p2}(t)
										\end{smallmatrix}
									\end{bmatrix} +	\nonumber \\
								& \qquad  + \begin{bmatrix}
																		\begin{smallmatrix}
																			\omega_1(t)	\\
																			\omega_2(t)
																		\end{smallmatrix}
																	\end{bmatrix}	
\end{align*}
controlled with (compare with (\ref{eq:controller}))
\begin{align*}
	u(t) = -2 x_{p1}(t) - 2 x_{p2}(t).
\end{align*}
As this controller is without internal dynamics, therefore $x(t):=x_p(t)=(x_{p1}(t),x_{p2}(t))$. Additionally, $\omega(t):=(\omega_1(t),\omega_2(t))$.

Let us consider the NCS setting in which noisy information regarding $x_{p1}$ and $x_{p2}$ are transmitted over a communication network while the control signal is not transmitted over a communication network nor distorted (i.e., $\hat{u}=u$). In addition, consider that the information regarding $x_{p2}$ arrives at the controller with delay $d$ while information regarding $x_{p1}$ arrives in timely manner. For the sake of simplicity, let us take $d=d_{p2}$. Apparently, the output of the plant is $y(t)=x_p(t)=x(t)$ and there are two NCS links so that $l=2$. Namely, $x_{p1}$ is transmitted through one NCS link while $x_{p2}$ is transmitted through the second NCS link. The repercussions of these two NCS links are modeled via the following error vector (compare with (\ref{sensing error}))
\begin{align*}
	e = \begin{bmatrix}
					e_1		\\
					e_2
			\end{bmatrix} = \hat{y} - \Bigg( \underbrace{\begin{bmatrix}
																												x_{p1}(t)	\\
																												0
																										\end{bmatrix} + \begin{bmatrix}
																																				0	\\
																																				x_{p2}(t-d)
																																		\end{bmatrix}}_{y_t} \Bigg).
\end{align*}
The expressions (\ref{eq:nesic 1}) and (\ref{eq:nesic 2}) for this example become:
\begin{align}
	&\dot{x}(t) = \underbrace{\begin{bmatrix}
									-0.5	&		1	\\
									-2	  &	 -1
								\end{bmatrix}}_{A_1} x(t) + \underbrace{\begin{bmatrix}
																														0		&			 0		\\
																														0		&		-0.3
																												\end{bmatrix}}_{A_2} x(t-d) +		
\end{align}
\begin{align}
							&+\underbrace{\begin{bmatrix}
																																																			-0.25		&		0		\\
																																																					1		&		0
																																																	\end{bmatrix}}_{B_1} x(t) N(x_t,e)
							+ \underbrace{\begin{bmatrix}
											0		&		0		\\
									 	 -2		&	 -2		
								\end{bmatrix}}_{B} e(t) + \omega(t),			\label{eq:nominal ex}
\end{align}	
\begin{align}							
	\dot{e}(t) = \dot{\hat{y}} &- B e(t-d) + \underbrace{\begin{bmatrix}
																								0.5		&		-1		\\
																									0		&		 0
																						\end{bmatrix}}_{C_1} x(t) + \nonumber \\
							&+\underbrace{\begin{bmatrix}
																																												0		&		0	\\
																																												2		&		2	
																																					    			 \end{bmatrix}}_{C_2} x(t-d) + \underbrace{\begin{bmatrix}
																																																																	0		&	 	0		\\
																																																																	0	  &	0.3		
																																																															 \end{bmatrix}}_{C_3} x(t-2d) +	
\end{align}
\begin{align}
							&+ \underbrace{\begin{bmatrix}
																0.25		&		0	\\
																	 0	 	&		0
															\end{bmatrix}}_{C_4} x(t) N(x_t,e) + \underbrace{\begin{bmatrix}
																																										   0		&		0	\\
																																											-1	 	&		0
																																								\end{bmatrix}}_{C_5} x(t-d) N(x_t,e_t) + \nonumber \\
							&+ \underbrace{\begin{bmatrix}
																																																																				-1		&		0		\\
																																																																					0		&		0
																																																																			\end{bmatrix}}_{C_6} \omega(t) +
										\underbrace{\begin{bmatrix}
															  0		&		0		\\
																0		&	 -1
										\end{bmatrix}}_{C_7} \omega(t-d),		\nonumber
\end{align}
where $N(x_t,e):= \sin \Bigg( \Big[ -2 \big(x_{p1}(t)+e_1(t) \big) - 2\big( x_{p2}(t-d) + e_2(t) \big) \Big]  x_{p2}(t-d_{p1}) \Bigg)$ and $N(x_t,e_t):= \sin \Bigg( \Big[ -2 \big(x_{p1}(t-d)+e_1(t-d) \big) - 2\big( x_{p2}(t-2d) + e_2(t-d) \big) \Big]  x_{p2}(t-d_{p1}-d) \Bigg)$.

According to \cite{dnesic2004} and \cite{wpmhheemels2010}, we select $W_{RR}(i,e):=\|D(i) e\|$ and $W_{TOD}(t,e):=\|e\|$, where $D(i)$ is a diagonal matrix whose diagonal elements are lower bounded by $1$ and upper bounded by $\sqrt{l}$. Next, we determine $L_{RR}$, $H_{RR}(x,\omega,d)$, $L_{TOD}$ and $H_{TOD}(x,\omega,d)$ from Theorem \ref{thm:main} for the ZOH strategy (i.e., $\dot{\hat{y}} \equiv \textbf{0}_{n_y}$) obtaining (\ref{eq:long 1}) and (\ref{eq:long 2}).
\begin{figure*}
\begin{align}
	&\Big\langle \frac{\partial W_{RR}(i,e)}{\partial e}, \dot{e} \Big\rangle \leq \|D(i) \dot{e}\| \leq  \underbrace{ \sqrt{l} \|B\|}_{L_{RR}} \underbrace{\|D(i)e(t-d)\|}_{W_{RR}(i,e(t-d))} +     \nonumber \\
					&+ \underbrace{\sqrt{l} \big( \|C_1 x(t) + C_2 x(t-d) + C_3 x(t-2d) + C_6 \omega(t) + C_7 \omega(t-d)\| + \|C_4 x(t)\| + \|C_5 x(t-d)\| \big)}_{H_{RR}(x_t,\omega_t)},	\label{eq:long 1} \\
	&\Big\langle \frac{\partial W_{TOD}(i,e)}{\partial e}, \dot{e} \Big\rangle \leq \underbrace{\|B\|}_{L_{TOD}} \underbrace{\|e(t-d)\|}_{W_{TOD}(i,e(t-d))}	+ \nonumber \\
					&+ \underbrace{ \big( \|C_1 x(t) + C_2 x(t-d) + C_3 x(t-2d) + C_6 \omega(t) + C_7 \omega(t-d)\| + \|C_4 x(t)\| + \|C_5 x(t-d)\| \big)}_{H_{TOD}(x_t,\omega_t)}, \label{eq:long 2}
\end{align}
\end{figure*}
In order to estimate $\gamma_H$, we utilize Lyapunov-Krasovskii functionals according to \cite[Chapter 6]{sboyd1994} and \cite{dfcoutinho2008}. Basically, if there exist $\gamma \geq 0$ and a Lyapunov-Krasovskii functional $V(x_t)$ for the nominal system (\ref{eq:Sigman}), that is (\ref{eq:nominal ex}), with the input $(W,\omega)$ and the output $H$ such that its time-derivative along the solution of (\ref{eq:Sigman}) with a zero initial condition satisfies:
\begin{align}
	\dot{V}(x_t) + H^{\top} H - \gamma^2 (W,&\omega)^{\top}(W,\omega) \leq 0, \nonumber \\
																					&\forall x_t \in C([-\overline{d},0],\mathbb{R}^{n_x}),		\label{eq:neg def}
\end{align}
than the corresponding $\mathcal{L}_2$-gain $\gamma_H$ is less than $\gamma$.
The functional used herein is
\begin{align}
	V(x_t) = x(t)^{\top} C x(t) + \int_{-d}^{0} x(t+\theta)^{\top} E x(t+\theta) \textrm{d}\theta,		\label{eq:LK functional}
\end{align}
where $C$ and $E$ are positive-definite symmetric matrices. 

Next, we illustrate the steps behind employing (\ref{eq:LK functional}). Let us focus on TOD (i.e., the input is $(e,\omega)$) and the output $C_1 x(t) + C_2 x(t-d) + C_6 \omega(t)$. The same procedure is repeated for the remaining terms of $H_{RR}(x_t,\omega_t)$ and $H_{TOD}(x_t,\omega_t)$. For the Lyapunov-Krasovskii functional in (\ref{eq:LK functional}), the expression (\ref{eq:neg def}) boils down to the Linear Matrix Inequality (LMI) (see \cite{sboyd1994} for more) given by (\ref{eq:LMI}).
\begin{figure*}
\begin{align}
	\begin{bmatrix}
		\begin{smallmatrix}
			A_1^{\top} C + C A_1 + E + N(x_t,e) (B_1^{\top} C + C B_1) + C_1^{\top}C_1 	&	C A_2 + C_1^{\top} C_2			&	C B 											&		C+C_1^{\top} C_6							\\
			A_2^{\top} C + C_2^{\top} C_1																								&		-E+C_2^{\top} C_2					&	\textbf{0}_{2 \times 2}		&		C_2^{\top} C_6								\\
			B^{\top} C																																	&		\textbf{0}_{2 \times 2}		&	-\gamma^2 I_2							&		\textbf{0}_{2 \times 2}				\\
			C+C_6^{\top} C_1																														&		C_6^{\top} C_2						&	\textbf{0}_{2 \times 2}		&		-\gamma^2 I_2 + C_6^{\top} C_6
		\end{smallmatrix}
	\end{bmatrix} \leq 0.				\label{eq:LMI}	\\
	\midrule	\nonumber
\end{align}
\end{figure*}
Notice that the above LMI has to hold for all $N(x_t,e) \in [-1,1]$. Using the LMI Toolbox in MATLAB, we find that the minimal $\gamma$ for which (\ref{eq:LMI}) holds is in fact $\gamma_H$. For our TOD example and the specified output, we obtain $\gamma_H=18.7051$. This $\gamma_H$ holds for all $d \geq 0$. In other words, any $d \geq 0$ is an admissible delay and belongs to the family $\mathcal{D}$. For RR, simply multiply $\gamma_H$ by $\sqrt{2}$.

Detectability of $x$ from $(W,x,H)$, which is a condition of Corollary \ref{cor:main}, is easily inferred by taking $x(t)$ to be the output of the nominal system and computing the respective $\mathcal{L}_2$-gain $\gamma_d$. Next, let us take the output of interest to be $x$ and find MATIs that yield the desired $\mathcal{L}_p$-gain from $\omega$ to $x$ to be $\gamma_{\mathrm{des}}=50$. Combining (\ref{eq:Lp e}) with $\gamma_d$ leads to the following condition
\begin{align*}
		\gamma_W \gamma_H < 1 - \frac{\gamma_d}{\gamma_{\mathrm{des}}}
\end{align*}
that needs to be satisfied (by changing $\gamma_W$ through changing MATIs) in order to achieve the desired gain $\gamma_{\mathrm{des}}$. In addition, observe that the conditions of Proposition \ref{prop:UGAS} hold (and the closed-loop system is an autonomous system) so that we can infer UGAS when $\omega \equiv\textbf{0}_{n_{\omega}}$ and $K_{\nu}=0$.

Let us now introduce the following estimator (compare with (\ref{eq:no estimation}))
\begin{align}
	\dot{\hat{y}} = B \hat{y}(t-d) = B \Bigg( e(t&-d) + \begin{bmatrix}
																													1		&		0	\\
																													0		&		0
																										\end{bmatrix} x(t-d) +	\nonumber \\
																									 &+		\begin{bmatrix}
																																								0		&		0	\\
																																								0		&		1
																																						\end{bmatrix} x(t-2d)	\Bigg),			\label{eq:estimator}
\end{align}
which can be employed when one is interested in any of the three performance objectives (i.e., UGAS, $\mathcal{L}_p$-stability or $\mathcal{L}_p$-stability with a desired gain).

\begin{figure*}
  \centering
  \subfigure[] {\label{fig:RR}\includegraphics[width=0.49\textwidth]{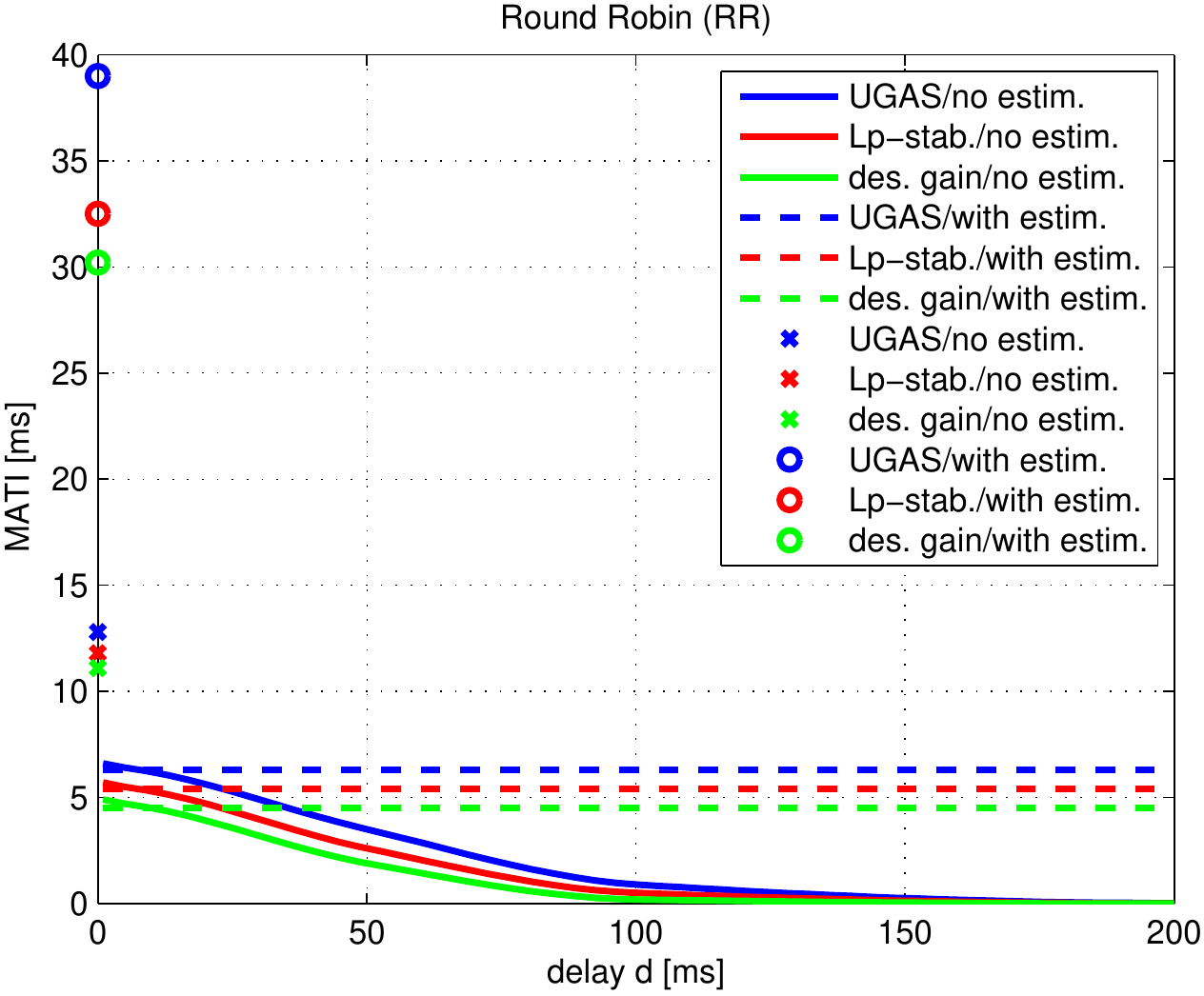}}
  \subfigure[] {\label{fig:TOD}\includegraphics[width=0.49\textwidth]{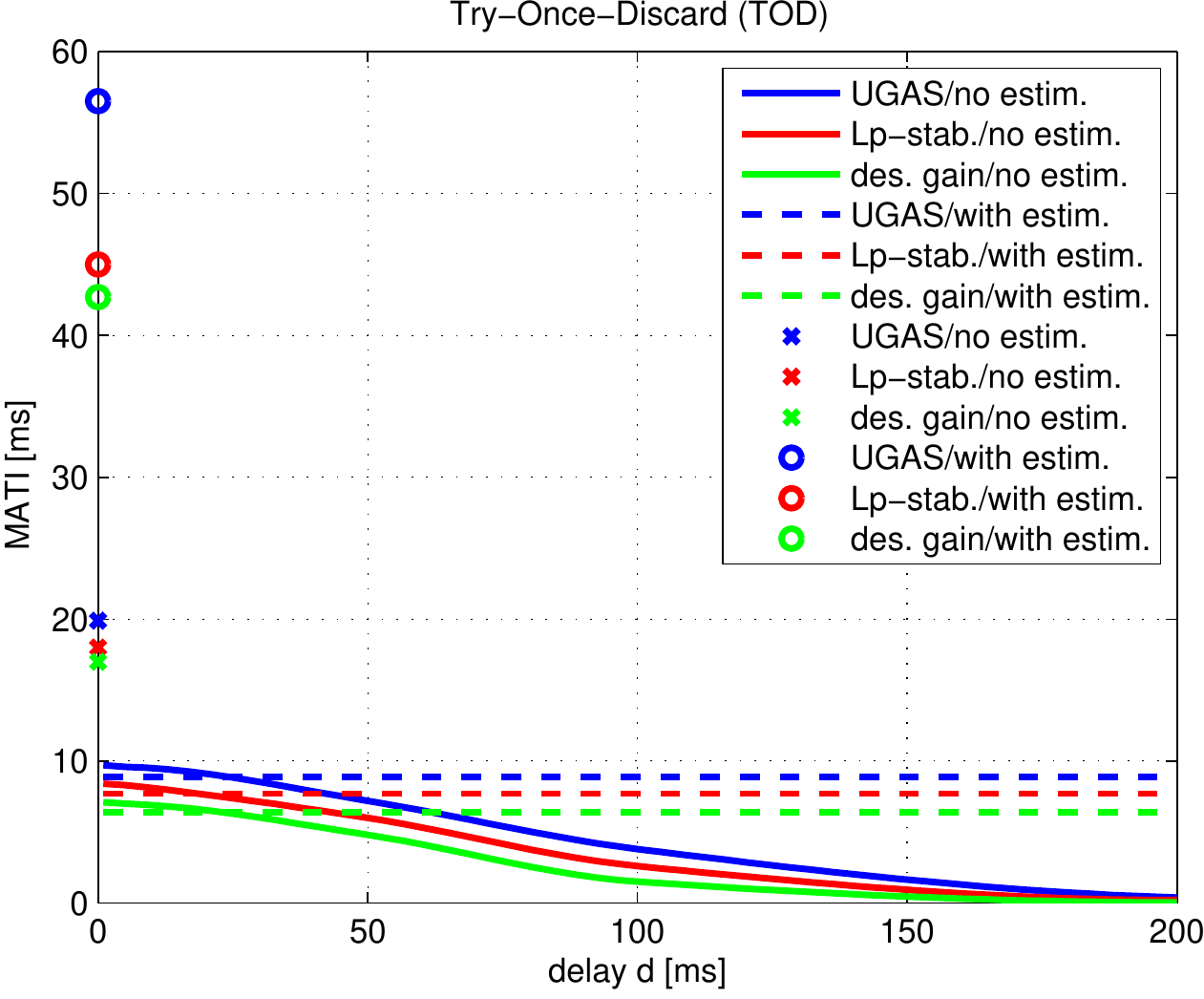}}
  \caption{Numerically obtained MATIs for different delay values $d \geq 0$ in scenarios with and without estimation:
					\subref{fig:RR} RR; and,
					\subref{fig:TOD} TOD.}
  \label{fig:protocols}
\end{figure*}

Figure \ref{fig:protocols} provides evidence that the TOD protocol results in greater MATIs (at the expense of additional implementation complexity/costs) and that the model-based estimators significantly prolong MATIs, when compared with the ZOH strategy, especially as $d$ increases. We point out that a different estimator (such as $\dot{\hat{y}}=k B\hat{y}(t-d)$ for some $k \in \mathbb{R}$) can be employed as $d$ approaches zero (because the estimator slightly decreases the MATIs as seen in Figure \ref{fig:protocols}) to render greater MATIs in comparison with the scenarios without estimation. In addition, notice that the case $d=0$ boils down to ODE modeling so that we can employ less conservative tools for computing $\mathcal{L}_2$-gains. Accordingly, the $4 \times 4$ LMI given by (\ref{eq:LMI}) becomes a $3 \times 3$ LMI resulting in a smaller $\gamma_H$. Furthermore, the constant $a$ in Theorem \ref{thm:main} becomes $L$, rather than $\frac{\overline{a}}{\underline{a}} L$, which in turn decreases $\gamma_W$ for the same $\tau$. Apparently, MATIs pertaining to UGAS are greater than the MATIs pertaining to $\mathcal{L}_p$-stability from $\omega$ to $(x,e)$ and these are greater than the MATIs pertaining to $\mathcal{L}_p$-stability from $\omega$ to $x$ with $\gamma_{\mathrm{des}}=50$. 

For completeness, we provide the gains used to obtain Figure \ref{fig:protocols}: $\gamma_{H,TOD}=9.6598$ for UGAS with ZOH and $d=0$; $\gamma_{H,TOD}=4.3344$ for UGAS with estimation and $d=0$; $\gamma_{H,TOD}=22.3631$ for UGAS with ZOH and $d > 0$; $\gamma_{H,TOD}=27.3659$ for UGAS with estimation and $d > 0$; $\gamma_{H,TOD}=10.8958$ for $\mathcal{L}_p$-stability with ZOH and $d=0$; $\gamma_{H,TOD}=5.3258$ for $\mathcal{L}_p$-stability with estimation and $d = 0$; $\gamma_{H,TOD}=26.4601$ for $\mathcal{L}_p$-stability with ZOH and $d > 0$; $\gamma_{H,TOD}=31.7892$ for $\mathcal{L}_p$-stability with estimation and $d > 0$; $\gamma_d=3.5884$ for $d = 0$; and, $\gamma_d=7.9597$ for $d > 0$. Recall that $\gamma_{H,RR}=\sqrt{2} \gamma_{H,TOD}$.

\subsection{Time-Varying Delays}

The following example is taken from \cite{ptallapragada2011,dtolicMED12} and the results are provided for $p=2$. Consider the inverted pendulum (compare with (\ref{eq:plant})) given by
\begin{align*}
		\dot{x}_{p1} &= x_{p2} + \omega_1		\\
		\dot{x}_{p2} &= \frac{1}{L} (-g \cos(x_{p1}) + u ) + \omega_2,
\end{align*}
where $g = 9.8$ and $L=2$, controlled with 
\begin{align*}
		u=-L\lambda x_{p2} + g \cos(x_{p1}) - K(x_{p2}+\lambda x_{p1}),
\end{align*}
where $K=50$ and $\lambda=1$. Clearly, the control system goal is to keep the pendulum at rest in the upright position. As this controller is without internal dynamics, therefore $x(t):=x_p(t)=(x_{p1}(t),x_{p2}(t))$. Additionally, $\omega(t):=(\omega_1(t),\omega_2(t))$.

Consider the NCS setting in which noisy information regarding $x_{p1}$ and $x_{p2}$ are transmitted over a communication network while the control signal is not transmitted over a communication network nor distorted (i.e., $\hat{u}=u$). In addition, consider that the information regarding $x_{p2}$ arrives at the controller with delay $d(t) \leq \breve{d}$ and $|\dot{d}(t)| \leq \breve{d}_1$ while information regarding $x_{p1}$ arrives instantaneously. Apparently, the output of the plant is $y(t)=x_p(t)=x(t)$ and there are two NCS links so that $l=2$. Namely, $x_{p1}$ is transmitted through one NCS link while $x_{p2}$ is transmitted through the second NCS link. The repercussions of these two NCS links are modeled via the following error vector (compare with (\ref{sensing error}))
\begin{align*}
	e = \begin{bmatrix}
					e_1		\\
					e_2
			\end{bmatrix} = \hat{y} - \Bigg( \underbrace{\begin{bmatrix}
																												x_{p1}(t)	\\
																												0
																										\end{bmatrix} + \begin{bmatrix}
																																				0	\\
																																				x_{p2}(t-d(t))
																																		\end{bmatrix}}_{y_t} \Bigg).
\end{align*}
The expressions (\ref{eq:nesic 1}) and (\ref{eq:nesic 2}) for this example become:
\begin{align}
	&\dot{x}(t) = \underbrace{\begin{bmatrix}
											0 	&		1	\\
		\frac{-K \lambda}{L}  &	 -1
								\end{bmatrix}}_{A_1} x(t) + \underbrace{\begin{bmatrix}
																														0		&													 0		\\
																														0		&		\frac{-K}{L} - \lambda L 
																												\end{bmatrix}}_{A_2} x(t-d(t)) +	\nonumber	
\end{align}
\begin{align}
							&+\begin{bmatrix}
											0		\\
											n(x_1(t),e_1(t))
							\end{bmatrix}
							+ \underbrace{\begin{bmatrix}
											0		&		0												\\
	\frac{-K \lambda}{L}		&	 \frac{-K}{L} - \lambda L		
								\end{bmatrix}}_{B} e(t) + \omega(t),	\nonumber
\end{align}	
\begin{align}							
	&\dot{e}(t) = \dot{\hat{y}} + \underbrace{\begin{bmatrix}
																								  0		&		-1		\\
																									0		&		 0
																						\end{bmatrix}}_{B_1} x(t) + \underbrace{\begin{bmatrix}
																																											\begin{smallmatrix}
																													-1	&		0		\\
															 											  		0		&		0
																								\end{smallmatrix}
																								\end{bmatrix}}_{C_1} \omega(t) + \qquad \quad \qquad \qquad  \nonumber 
\end{align}
\begin{align}
							&+ \Bigg( -B e(t-d(t)) + \underbrace{\begin{bmatrix}
																																	0		&		0	\\
																								\frac{K \lambda}{L}		&		0	
																				\end{bmatrix}}_{B_2} x(t-d(t)) - 	\nonumber
\end{align}
\begin{align}
							&- \begin{bmatrix}
																							0	\\
											n(x_1(t-d(t)),e_1(t-d(t)))
								  \end{bmatrix} +  \underbrace{\begin{bmatrix}
																										\begin{smallmatrix}
																												0		&		0		\\
																												0		&	 -1
																										\end{smallmatrix}
																							\end{bmatrix}}_{C_2} \omega(t-d(t)) + \nonumber \\
							&+ \underbrace{\begin{bmatrix}
																	0		&	 											0		\\
																	0	  & \frac{K}{L} + \lambda L		
															\end{bmatrix}}_{B_3} x(t-2d(t)) \Bigg) (1-\dot{d}(t)),		\nonumber
\end{align}
where $n(x_1(t),e_1(t)) = \frac{-2g}{L} \sin\Big( \frac{e_1(t)+2x_1(t)}{2} \Big) \sin\Big( \frac{e_1(t)}{2} \Big)$.

According to \cite{dnesic2004} and \cite{wpmhheemels2010}, we select $W_{RR}(i,e):=\|D(i) e\|$ and $W_{TOD}(t,e):=\|e\|$, where $D(i)$ is a diagonal matrix whose diagonal elements are lower bounded by $1$ and upper bounded by $\sqrt{l}$. Next, we determine $L_{RR}$, $H_{RR}(x,\omega,d)$, $L_{TOD}$ and $H_{TOD}(x,\omega,d)$ from Theorem \ref{thm:main} for the ZOH strategy (i.e., $\dot{\hat{y}} \equiv \textbf{0}_{n_y}$) obtaining (\ref{eq:long 3}) and (\ref{eq:long 4}).
\begin{figure*}
\begin{align}
	\Big\langle \frac{\partial W_{RR}(i,e)}{\partial e},& \dot{e} \Big\rangle \leq \|D(i) \dot{e}\| \leq  \underbrace{ \sqrt{l} \big(1+\breve{d}_1\big) \Big( \|B\| + \frac{g}{L} \Big)}_{L_{RR}} \underbrace{\|D(i)e(t-d(t))\|}_{W_{RR}(i,e(t-d(t)))} +     \nonumber \\
					&+ \underbrace{\sqrt{l} \Big( \big\|B_1 x(t) + C_1 \omega(t) + \big(1-\dot{d}(t)\big) \big(B_2 x(t-d(t)) + B_3 x(t-2d(t))  + C_2 \omega(t-d(t)) \big) \big\| \Big)}_{H_{RR}(x_t,\omega_t)},	\label{eq:long 3} \\
	\Big\langle \frac{\partial W_{TOD}(i,e)}{\partial e},& \dot{e} \Big\rangle \leq \underbrace{ \big(1+\breve{d}_1 \big) \Big(\|B\| + \frac{g}{L} \Big)}_{L_{TOD}} \underbrace{\|e(t-d(t))\|}_{W_{TOD}(i,e(t-d(t)))}	+ \nonumber \\
					&+ \underbrace{ \Big( \big\|B_1 x(t) + C_1 \omega(t) + \big(1-\dot{d}(t) \big) \big( B_2 x(t-d(t)) + B_3 x(t-2d(t)) + C_2 \omega(t-d(t)) \big) \big\| \Big)}_{H_{TOD}(x_t,\omega_t)}, \label{eq:long 4}	\\ 
					\midrule \nonumber
\end{align}
\end{figure*}

The Lyapunov-Krasovskii functional used for the pendulum example is
\begin{align*}
	&V(t,x_t,\dot{x}_t) = x(t)^{\top} P x(t) + \int_{t-\breve{d}}^{t} x(s)^{\top} S x(s) \textrm{d}s +	\nonumber \\
			&+ \breve{d} \int_{-\breve{d}}^{0} \int_{t+\theta}^{t}  \dot{x}(s)^{\top} R \dot{x}(s) \textrm{d}s \textrm{d}\theta + \int_{t-d(t)}^{t} x(s)^{\top} Q x(s) \textrm{d}s,
\end{align*}
where $P$ is a positive-definite symmetric matrix while $S$, $R$ and $Q$ are positive-semidefinite symmetric matrices. Next, we take the output of interest to be $x$ and find MATIs that yield the desired $\mathcal{L}_p$-gain from $\omega$ to $x$ to be $\gamma_{\mathrm{des}}=15$. In addition, observe that the conditions of Proposition \ref{prop:UGAS} hold (and the closed-loop system is an autonomous system) so that we can infer UGAS when $\omega \equiv\textbf{0}_{n_{\omega}}$ and $K_{\nu}=0$.

We use the following estimator (compare with (\ref{eq:no estimation}))
\begin{align*}
	\dot{\hat{y}} &= B \hat{y}(t-d(t)) (1-\dot{d}(t)) = B \Bigg( e(t-d(t)) +	\nonumber \\
								&+ \begin{bmatrix}
																													1		&		0	\\
																													0		&		0
																										\end{bmatrix} x(t-d(t)) +		\begin{bmatrix}
																																											0		&		0	\\
																																											0		&		1
																																								\end{bmatrix} x(t-2d(t)) \Bigg)  (1-\dot{d}(t)),
\end{align*}
which can be employed in any of the three performance objectives (i.e., UGAS, $\mathcal{L}_p$-stability or $\mathcal{L}_p$-stability with a desired gain) provided $d(t)$ is known. One can use the ideas from Remark \ref{rmk:no general 2} towards obtaining known delays.

\begin{figure*}
  \centering
  \subfigure[] {\label{fig:RR2}\includegraphics[width=0.49\textwidth]{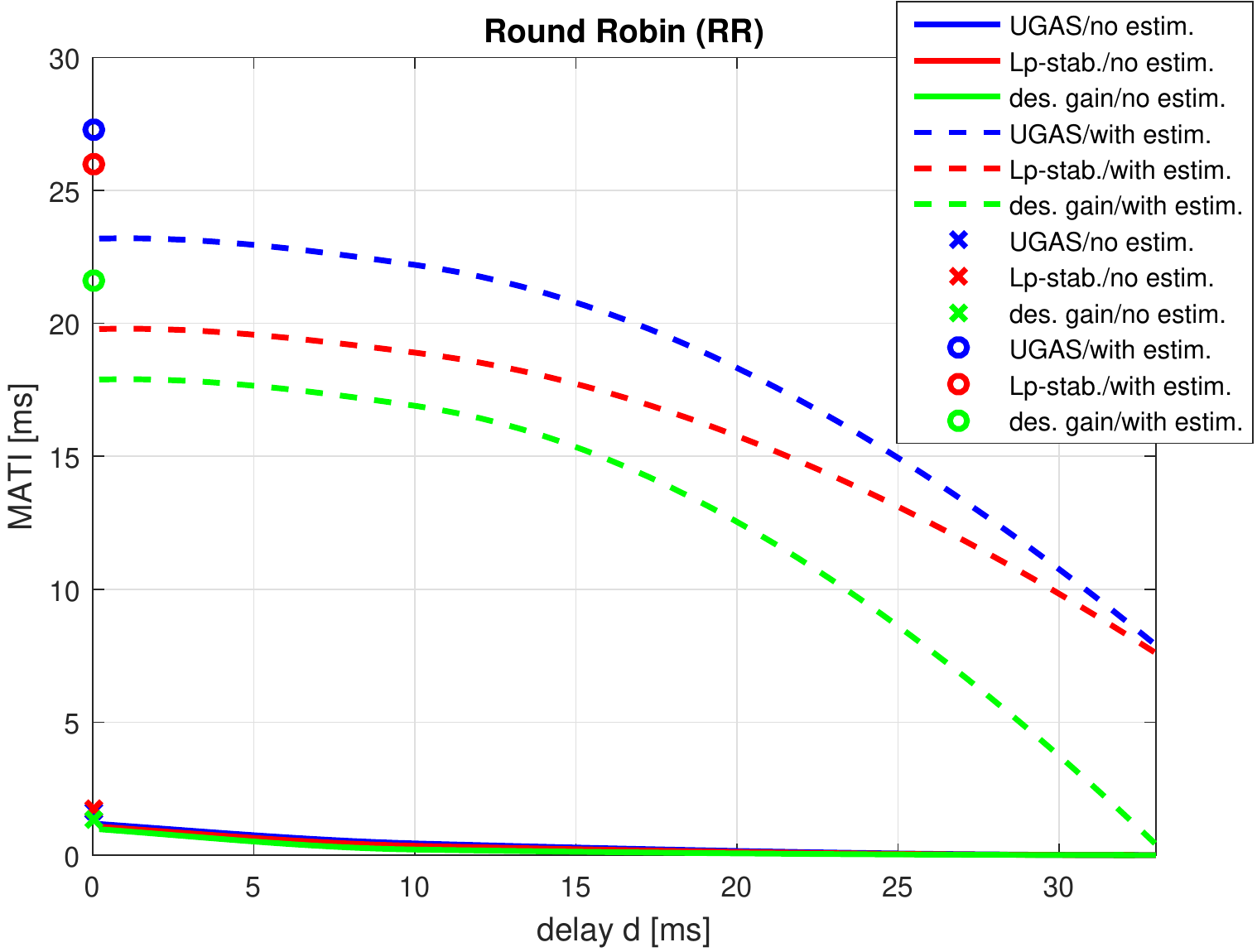}}
  \subfigure[] {\label{fig:TOD2}\includegraphics[width=0.49\textwidth]{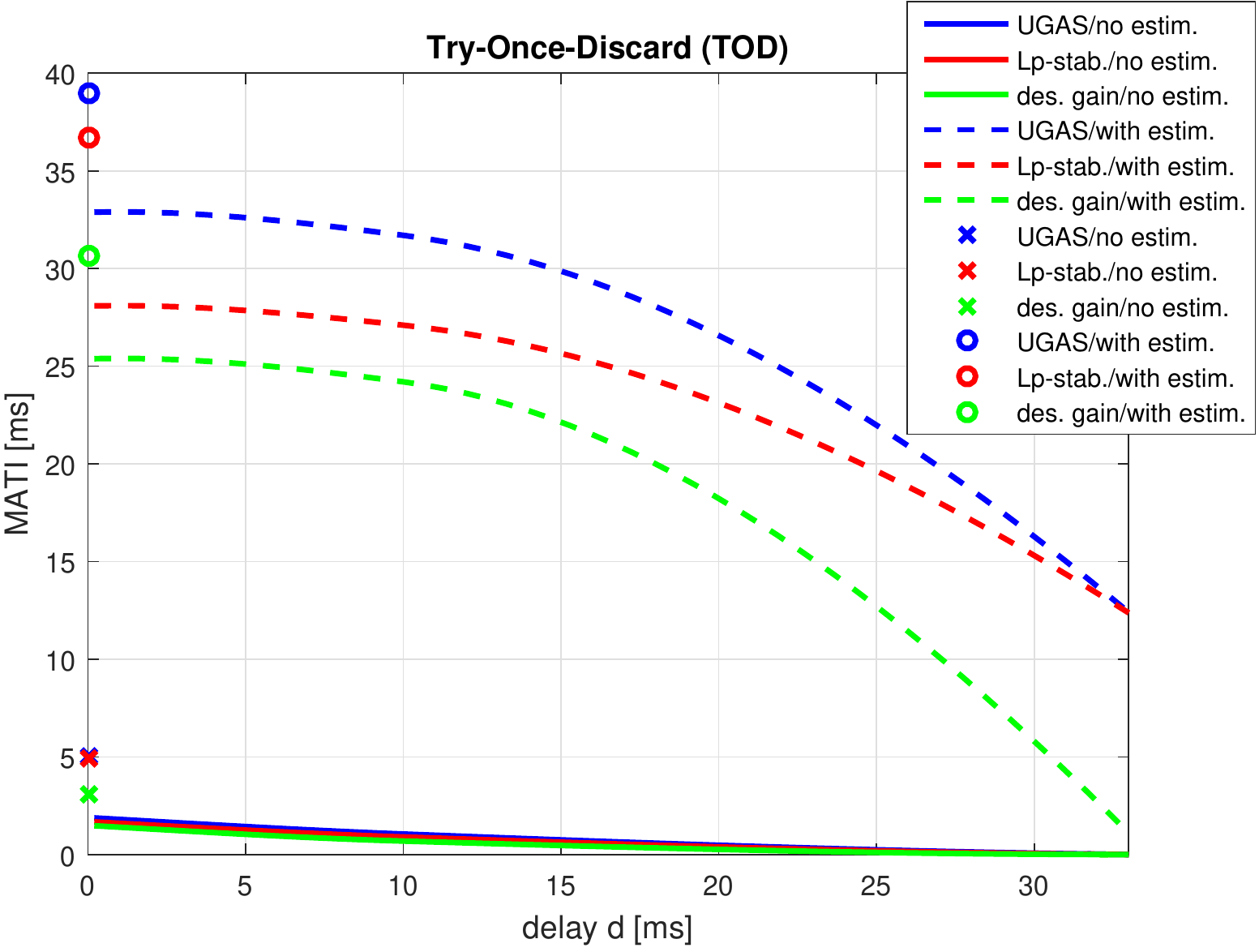}}
  \caption{Numerically obtained MATIs for various constant delay values $d \geq 0$ in scenarios with and without estimation:
					\subref{fig:RR2} RR; and,
					\subref{fig:TOD2} TOD.}
  \label{fig:protocols2}
\end{figure*}

\begin{figure*}
  \centering
  \subfigure[] {\label{fig:RR3}\includegraphics[width=0.49\textwidth]{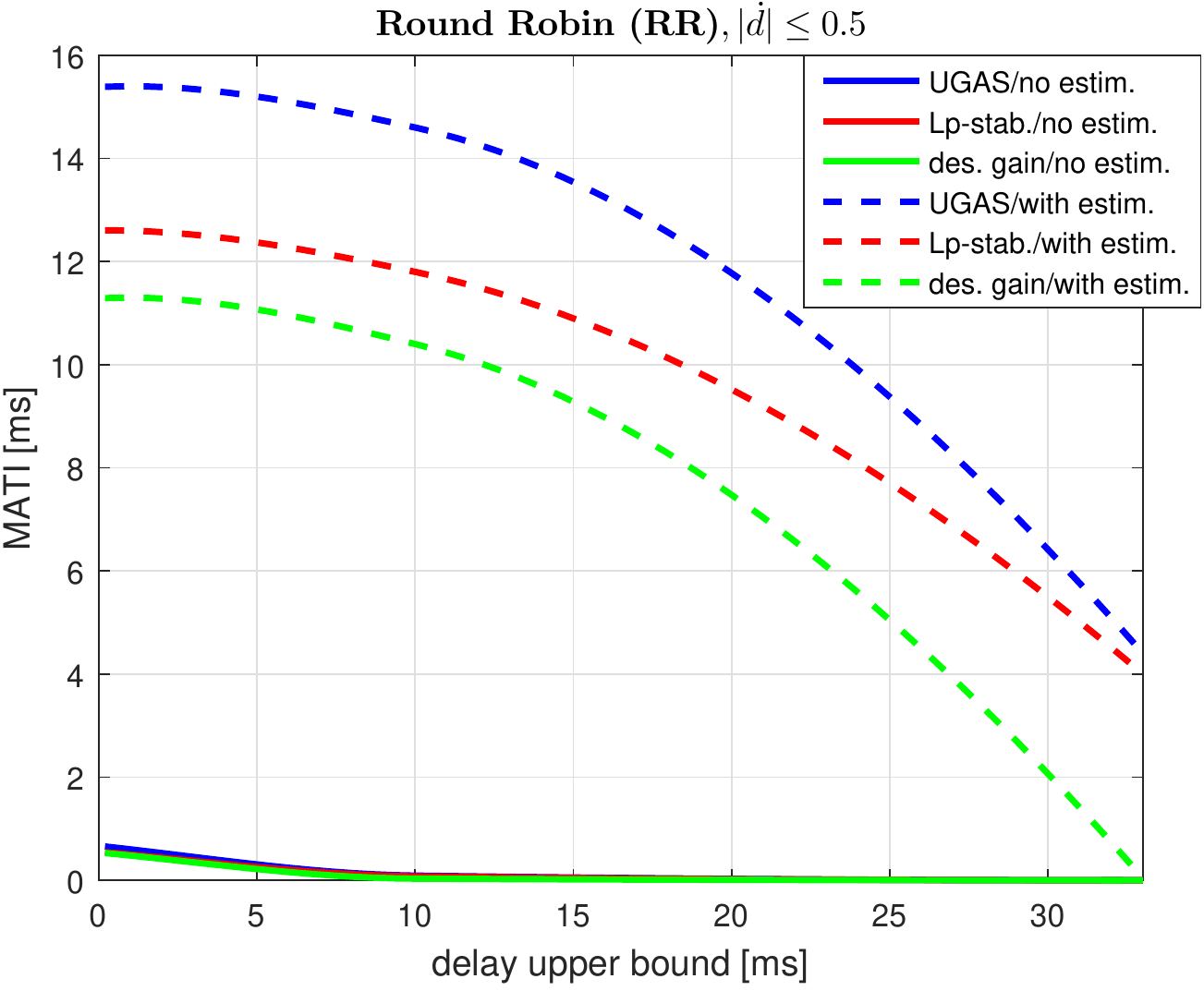}}
  \subfigure[] {\label{fig:TOD3}\includegraphics[width=0.49\textwidth]{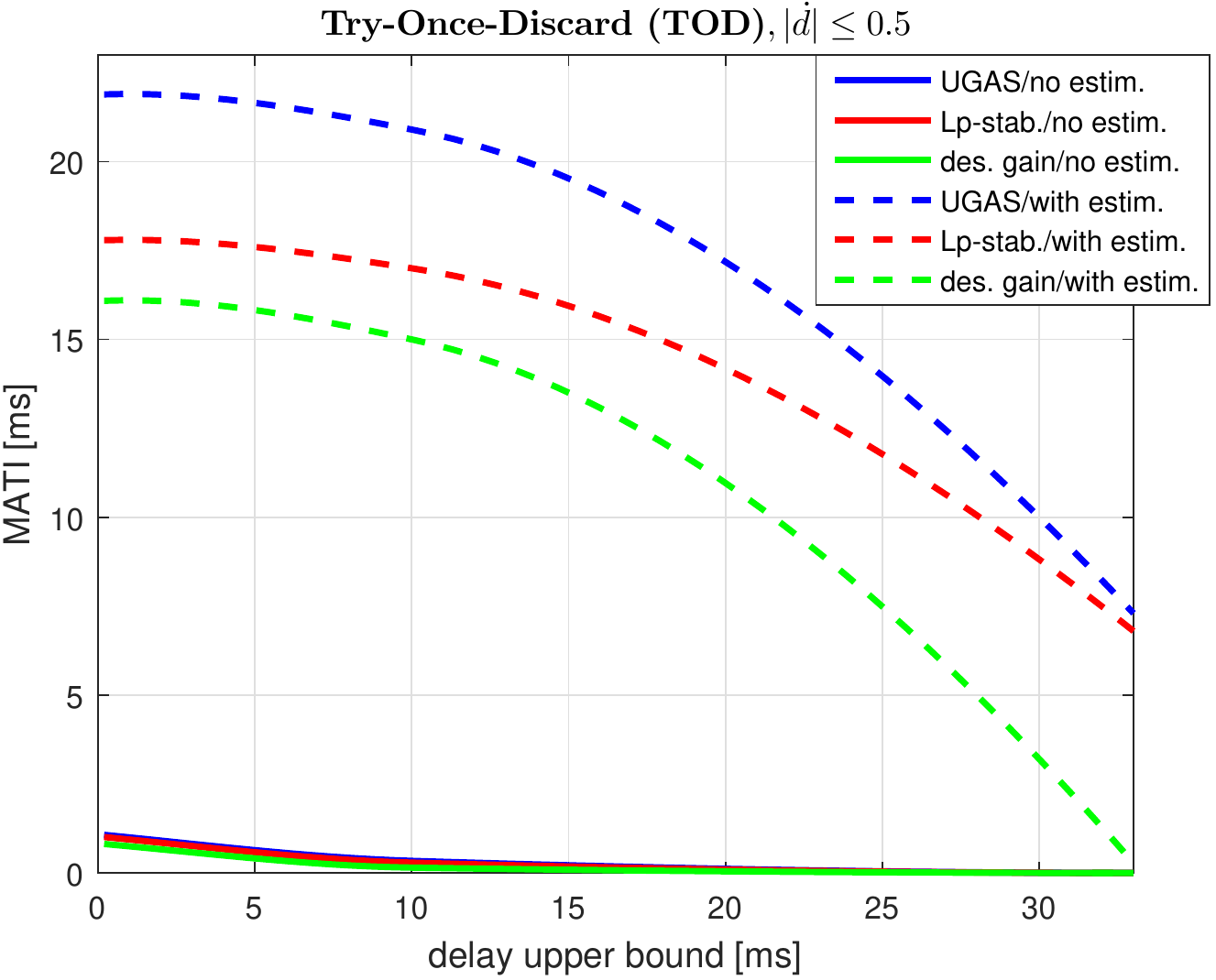}}
  \caption{Numerically obtained MATIs for various time-varying delays $d(t)$ such that $d(t) \leq \breve{d}$ and $|\dot{d}(t)| \leq \breve{d}_1 = 0.5$ in scenarios with and without estimation:
					\subref{fig:RR3} RR; and,
					\subref{fig:TOD3} TOD.}
  \label{fig:protocols3}
\end{figure*}

Figures \ref{fig:protocols2} and \ref{fig:protocols3} provide evidence that the TOD protocol results in greater MATIs (at the expense of additional implementation complexity/costs) and that the model-based estimators significantly prolong MATIs, when compared with the ZOH strategy. In addition, notice that the case $\breve{d}=0$ boils down to ODE modeling so that we can employ less conservative tools for computing $\mathcal{L}_2$-gains. Apparently, MATIs pertaining to UGAS are greater than the MATIs pertaining to $\mathcal{L}_p$-stability from $\omega$ to $(x,e)$ and these are greater than the MATIs pertaining to $\mathcal{L}_p$-stability from $\omega$ to $x$ with $\gamma_{\mathrm{des}}=15$. As expected, time-varying delays upper bounded with some $\breve{d}$ lead to smaller MATIs when compared to constant delays $\breve{d}$. It is worth mentioning that $\breve{d}=33$ ms is the maximal value for which we are able to establish condition (a) of Theorem \ref{thm:main}. Consequently, the delays from Figures \ref{fig:protocols2} and \ref{fig:protocols3} are instances of admissible delays. The exhaustive search for admissible delays is an open problem that is out of scope of this article.

\section{Conclusion}		\label{sec:concl}

In this article, we study how much information exchange between a plant and controller can become intermittent (in terms of MATIs) such that the performance objectives of interest are not compromised. Depending on the noise and disturbance setting, the performance objective can be UGAS or $\mathcal{L}_p$-stability (with a prespecified gain and towards the output of interest). Our framework incorporates time-varying delays and transmission intervals that can be smaller than the delays, plants/controllers with delayed dynamics, external disturbances (or modeling uncertainties), UGES scheduling protocols (e.g., RR and TOD protocols), distorted data and model-based estimators. As expected, the TOD protocol results in greater MATIs than the RR protocol. Likewise, estimation (rather than the ZOH strategy) in between two consecutive transmission instants extends the MATIs.

The primary goal of our future work is to devise conditions rendering $\mathcal{L}_p$-stability of the error dynamics involving several time-varying delays (see Remark \ref{rmk:no general 2}). In addition, in light of \cite{vsdolk2014}, we plan to design event- and self-triggered realizations of our approach.


\section*{Appendix}

\subsection{Proof of Lemma \ref{lem:UGES}}		\label{sec:UGES proof}

This proof follows the exposition in \cite{jzhou2009}. The following two definitions regarding (\ref{eq:general hybrid sys}) are utilized in this proof and are taken from \cite{jzhou2009}.

\begin{definition}	[Lyapunov Function]		\label{def:lyapunov}
The function $V : [t_0,\infty) \times \mathbb{R}^{n_{\xi}} \rightarrow \mathbb{R}^+$ is said to belong to the class $\nu_0$ if we have the following:
\begin{enumerate}
\item $V$ is continuous in each of the sets $[t_{k-1}, t_k) \times \mathbb{R}^{n_{\xi}}$, and for each $\xi \in \mathbb{R}^{n_{\xi}}$ and each $t \in [t_{k-1}, t_k)$, where $k \in \mathbb{N}$, the limit $\lim_{(t,y) \rightarrow (t^-_k,x)} V(t, y) = V (t^-_k, x)$ exists;
\item $V(t, \xi)$ is locally Lipschitz in all $\xi \in \mathbb{R}^{n_{\xi}}$; and 
\item $V (t, 0) \equiv 0$ for all $t \geq t_0$.
\end{enumerate}
\end{definition}

\begin{definition}	[Upper Dini Derivative]		\label{def:dini}
Given a function $V : [t_0,\infty) \times \mathbb{R}^{n_{\xi}} \rightarrow \mathbb{R}^+$, the upper right-hand derivative of $V$ with respect to system (\ref{eq:general hybrid sys}) is
defined by $D^+ V (t, \xi(t)) = \limsup_{\delta \searrow 0} \frac{1}{\delta} [V (t+\delta, \xi(t+\delta)) - V (t, \xi(t))]$.
\end{definition}

\begin{proof}
	We prove this theorem employing mathematical induction. Consider the following Lyapunov function for (\ref{eq:LTI}) with $\tilde{u} \equiv 0$, $\tilde{\nu} \equiv 0$:
	\begin{align}
		V(t,\xi(t)) = r \xi(t)^2.			\label{eq:lyap bounds}
	\end{align}
	Using $2 a b \leq a^2 + b^2$, $a,b \in \mathbb{R}$ in what follows, we obtain
	\begin{align}
		D^+ V(t,&\xi(t)) \leq 2 r \xi(t) a \xi(t-d(t)) \nonumber \\
										\leq& r^2 \xi(t)^2 + a^2 \xi(t-d(t))^2 \nonumber \\
										\leq&	r V(t,\xi(t)) + \lambda_1 V(t-d(t),\xi(t-d(t))),	\label{eq:dini bound}
	\end{align}
	along the solutions of (\ref{eq:LTI}) with $\tilde{u},\tilde{\nu} \equiv 0$, for each $t \not \in \mathcal{T}$.	In what follows, we are going to show that
	\begin{align}
		V(t,\xi(t)) \leq r M \|\xi_{t_0}\|^2 e^{-\lambda (t-t_0)}, \qquad \forall t \geq t_0,		\label{eq:show this}
	\end{align}
	where
	\begin{align}	
		M > e^{\lambda \tau} \geq e^{\lambda (t_1 - t_0)}.	\label{eq:M}	
	\end{align}
	One can easily verify that (I) implies (\ref{eq:M}). Notice that this choice of $M$ yields $\|\xi_{t_0}\|^2 < M \|\xi_{t_0}\|^2 e^{-\lambda (t_1 - t_0)}$.
	
	According to the principle of mathematical induction, we start showing that
	\begin{align}
		V(t,\xi(t)) \leq r M \|\xi_{t_0}\|^2 e^{-\lambda (t-t_0)}, \qquad \forall t \in [t_0,t_1),		\label{eq:first interval}
	\end{align}
	holds by showing that the basis of mathematical induction
	\begin{align}
		V(t,\xi(t)) \leq r M \|\xi_{t_0}\|^2 e^{-\lambda (t_1-t_0)}, \qquad \forall t \in [t_0,t_1),		\label{eq:first interval strict}
	\end{align}
	holds. For the sake of contradiction, suppose that (\ref{eq:first interval strict}) does not hold. From (\ref{eq:lyap bounds}) and (\ref{eq:M}), we infer that there exists $\overline{t} \in (t_0,t_1)$ such that
	\begin{align*}
		V(\overline{t},\xi(\overline{t})) &> r M \|\xi_{t_0}\|^2 e^{-\lambda (t_1-t_0)}	\nonumber	\\
																			&> r \|\xi_{t_0}\|^2 \geq V(t_0+s,\xi(t_0+s)), \;\; \forall s \in [-\breve{d},0],
	\end{align*}
	which implies that there exists $t^* \in (t_0,\overline{t})$ such that
	\begin{align}
		V(t^*,\xi(t^*)) &= r M \|\xi_{t_0}\|^2 e^{-\lambda (t_1-t_0)},	\nonumber \\
		V(t,\xi(t)) &\leq V(t^*,\xi(t^*)),	\qquad \forall t \in [t_0-\breve{d},t^*],		\label{eq:star}
	\end{align}
	and there exists $t^{**} \in [t_0,t^*)$ such that
	\begin{align}
		V(t^{**},\xi(t^{**})) &= r \|\xi_{t_0}\|^2,	\nonumber \\
		V(t^{**},\xi(t^{**})) &\leq V(t,\xi(t)),	\qquad \qquad	 \forall t \in [t^{**},t^*].		\label{eq:starstar}
	\end{align}
	Using (\ref{eq:star}) and (\ref{eq:starstar}), for any $s \in [-\breve{d},0]$ we have
	\begin{align}
		V(t+s&,\xi(t+s)) \leq r M \|\xi_{t_0}\|^2 e^{-\lambda (t_1-t_0)}			\nonumber \\	
										&= M e^{-\lambda (t_1-t_0)} V(t^{**},\xi(t^{**})) 		\nonumber \\
										&\leq M e^{-\lambda (t_1-t_0)} V(t,\xi(t)), \quad \forall t \in [t^{**},t^*].		\label{eq:contradiction 1}
	\end{align}
	Let us now take $s$ to be a function of time, that is, $s:=s(t)$. From (\ref{eq:dini bound}) and (\ref{eq:contradiction 1}) with $s:=s(t)=-d(t) \in [-\breve{d},0]$ for all $t \in [t^{**},t^*]$, we obtain
	\begin{align*}
		D^+ V(t,\xi(t)) \leq \big( r + \lambda_1 M e^{-\lambda (t_1-t_0)} \big) V(t,\xi(t)),
	\end{align*}
	for all $t \in [t^{**},t^*]$. Recall that $0 < t^*-t^{**} \leq t_1-t_0 \leq \tau$. Having that said, it follows from (\ref{eq:M}), (\ref{eq:star}) and (\ref{eq:starstar}) that
	\begin{align*}
		V(t^*,\xi(t^*)) &\leq V(t^{**},\xi(t^{**})) e^{( r + \lambda_1 M e^{-\lambda (t_1-t_0)} ) (t^*-t^{**})}	\nonumber \\
										&= r \|\xi_{t_0}\|^2 e^{( r + \lambda_1 M e^{-\lambda (t_1-t_0)} ) (t^*-t^{**})}	\nonumber \\
										&\leq r \|\xi_{t_0}\|^2 e^{( r + \lambda_1 M e^{-\lambda \tau} ) \tau}	\overset{\textrm{(I)}}{<} r M	\|\xi_{t_0}\|^2 e^{-\lambda \tau} \nonumber \\
										&\leq r M	\|\xi_{t_0}\|^2 e^{-\lambda(t_1-t_0)} = V(t^*,\xi(t^*)),
	\end{align*}
	which is a contradiction. Hence, (\ref{eq:first interval}) holds, i.e., (\ref{eq:show this}) holds over $[t_0,t_1)$.
	
	It is now left to show that (\ref{eq:show this}) holds over $[t_{k-1},t_{k})$ for each $k \in \mathbb{N}$, $k \geq 2$. To that end, assume that (\ref{eq:show this}) holds for each $k \in \{1,2,\ldots,m\}$, where $m \in \mathbb{N}$, i.e.,
	\begin{align}
		V(t,\xi(t)) \leq r M \|\xi_{t_0}\|^2 e^{-\lambda (t-t_0)}, \qquad \forall t \in [t_0,t_k), 	\label{eq:induction assm}
	\end{align}
	for every $k \in \{1,2,\ldots,m\}$. Let us now show that (\ref{eq:show this}) holds over $[t_m,t_{m+1})$ as well, i.e.,
	\begin{align}
		V(t,\xi(t)) \leq r M \|\xi_{t_0}\|^2 e^{-\lambda (t-t_0)}, \;\; \forall t \in [t_m,t_{m+1}).	\label{eq:induction step}
	\end{align}
	For the sake of contradiction, suppose that (\ref{eq:induction step}) does not hold. Then, we can define
	\begin{align*}
		\overline{t} := \inf \Big\{ t \in [t_m, t_{m+1}) \big| V (t, \xi(t)) > r M \|\xi_{t_0}\|^2 e^{-\lambda (t-t_0)} \Big\}.
	\end{align*}
	From (\ref{eq:LTI jump}) with $\tilde{\nu} \equiv 0$ and (\ref{eq:induction assm}), we know that 
		\begin{align*}
			V(t_m^+,\xi(t_m^+)) &= r \xi(t_m^+)^2 = r c^2 \xi(t_m)^2 = \lambda_2 V(t_m^-,\xi(t_m^-))	\nonumber \\
													&\leq \lambda_2 r M \| \xi_{t_0} \|^2 e^{-\lambda (t_m-t_0)} \nonumber \\
													&= \lambda_2 r M \| \xi_{t_0} \|^2 e^{\lambda (\overline{t}-t_m)} e^{-\lambda (\overline{t}-t_0)}	\nonumber \\
													&< \lambda_2 r e^{\lambda (t_{m+1}-t_m)} M \| \xi_{t_0} \|^2 e^{-\lambda (\overline{t}-t_0)}	\nonumber \\
													&< r M \| \xi_{t_0} \|^2 e^{-\lambda (\overline{t}-t_0)},
		\end{align*}
	where $\lambda_2 \in (0,1)$ is such that $\lambda_2 e^{\lambda(t_{m+1}-t_m)} \leq \lambda_2 e^{\lambda \tau} < 1$. One can easily verify that (II) implies $\lambda_2 e^{\lambda \tau} < 1$. Another fact to notice is that $\overline{t} \neq t_m$. Employing the continuity of $V(t,\xi(t))$ over the interval $[t_m,t_{m+1})$, we infer
	\begin{align}
		V (\overline{t}, \xi(\overline{t})) &= r M \|\xi_{t_0}\|^2 e^{-\lambda (\overline{t}-t_0)},	\nonumber \\
		V (t, \xi(t)) &\leq V (\overline{t}, \xi(\overline{t})), \qquad \qquad	\forall t \in [t_m,\overline{t}].			\label{eq:overlinet}
	\end{align}
	In addition, we know that there exists $t^* \in (t_m,\overline{t})$ such that
	\begin{align}
		V (t^*,\xi(t^*)) &= \lambda_2 r e^{\lambda(t_{m+1}-t_m)} M \|\xi_{t_0}\|^2 e^{-\lambda(\overline{t}-t_0)},		\nonumber \\
		V (t^*,\xi(t^*)) &\leq V (t, \xi(t)) \leq V (\overline{t},\xi(\overline{t})), \quad	\forall t \in [t^*,\overline{t}].		\label{eq:sandwich}
	\end{align}
	
	We proceed as follows, for any $t \in [t^*, \overline{t}]$ and any $s \in [-\breve{d}, 0]$, then either $t + s \in [t_0 - \breve{d}, t_m)$ or $t + s \in [t_m, \overline{t}]$. If $t + s \in [t_0 - \breve{d}, t_m)$, then from (\ref{eq:induction assm}) we have
	\begin{align}
		V(t+s,&\xi(t+s)) \leq r M \|\xi_{t_0}\|^2 e^{-\lambda (t+s-t_0)} \nonumber \\
										&= r M \|\xi_{t_0}\|^2 e^{-\lambda (t-t_0)} e^{-\lambda s}		\nonumber \\
									  &\leq r M \|\xi_{t_0}\|^2 e^{-\lambda (\overline{t} - t_0)} e^{\lambda (\overline{t}-t)} e^{\lambda \breve{d}}	\nonumber \\
										&\leq r e^{\lambda \breve{d}} e^{\lambda (t_{m+1}-t_m)} M \|\xi_{t_0}\|^2 e^{-\lambda (\overline{t} - t_0)}.		\label{eq:case 1}
	\end{align}
	If $t + s \in [t_m, \overline{t}]$, then from (\ref{eq:overlinet}) we have
	\begin{align}
		V(t+s,\xi(&t+s)) \leq r M \|\xi_{t_0}\|^2 e^{-\lambda (\overline{t}-t_0)}	\nonumber \\
										 &\leq r e^{\lambda \breve{d}} e^{\lambda (t_{m+1}-t_m)} M \|\xi_{t_0}\|^2 e^{-\lambda (\overline{t} - t_0)}.		\label{eq:case 2}
	\end{align}
	Apparently, the upper bounds (\ref{eq:case 1}) and (\ref{eq:case 2}) are the same; hence, it does not matter whether $t + s \in [t_0 - \breve{d}, t_m)$ or $t + s \in [t_m, \overline{t}]$. Therefore, from (\ref{eq:sandwich}) and this upper bound, we have for any $s \in [-\breve{d}, 0]$
	\begin{align}
		V(t+s,\xi(t+s)) \leq \frac{e^{\lambda \breve{d}}}{\lambda_2} V(t^*,\xi(t^*)) \leq \frac{e^{\lambda \breve{d}}}{\lambda_2} V(t,\xi(t)),	\label{eq:dini 2}
	\end{align}
	for all $t \in [t^*, \overline{t}]$. Once more, let us take $s$ to be a function of time, that is, $s:=s(t)=-d(t) \in [-\breve{d},0]$ for all $t \in [t^*, \overline{t}]$. Now, from (\ref{eq:dini bound}) and (\ref{eq:dini 2}), we have
	\begin{align*}
		D^+ V(t,\xi(t)) \leq \big( r + \frac{\lambda_1}{\lambda_2} e^{\lambda \breve{d}} \big) V(t,\xi(t)), \qquad \forall t \in [t^*, \overline{t}].
	\end{align*}
	Recall that $0 < \overline{t}-t^{*} \leq t_{m+1}-t_m \leq \tau$. Accordingly, we reach
	\begin{align*}
		V(&\overline{t},\xi(\overline{t})) \leq V(t^*,\xi(t^*)) e^{( r + \frac{\lambda_1}{\lambda_2} e^{\lambda \breve{d}} ) (\overline{t}-t^*)} \nonumber \\
																			&= \lambda_2 r e^{\lambda(t_{m+1}-t_m)} M \|\xi_{t_0}\|^2 e^{-\lambda(\overline{t}-t_0)} e^{( r + \frac{\lambda_1}{\lambda_2} e^{\lambda \breve{d}} ) (\overline{t}-t^*)}		\nonumber \\
																		  &\leq \lambda_2 e^{\lambda \tau} e^{( r + \frac{\lambda_1}{\lambda_2} e^{\lambda \breve{d}} ) \tau} r M \|\xi_{t_0}\|^2 e^{-\lambda (\overline{t}-t_0)}		\nonumber \\
																			&\overset{\textrm{(II)}}{<} r M \|\xi_{t_0}\|^2 e^{-\lambda (\overline{t}-t_0)} = V (\overline{t}, \xi(\overline{t})),
	\end{align*}
	which is a contradiction; hence, (\ref{eq:show this}) holds over $[t_m,t_{m+1})$. Employing mathematical induction, one immediately infers that (\ref{eq:show this}) holds over $[t_{k-1},t_{k})$ for each $k \in \mathbb{N}$. From (\ref{eq:lyap bounds}) and (\ref{eq:show this}), it follows that
	\begin{align*}
		\|\xi(t)\| \leq \sqrt{M} \|\xi_{t_0}\| e^{-\frac{\lambda}{2} (t-t_0)}, \qquad \forall t \geq t_0.
	\end{align*}
\end{proof}

\subsection{Proof of Theorem \ref{tm:Lp}}

The following two well-known results can be found in, for example, \cite{mtabbara2007}.

\begin{lemma} [Young's Inequality]		\label{lemma:young}
		Let $*$ denote convolution over an interval $I$, $f \in \mathcal{L}_p[I]$ and $g \in \mathcal{L}_q[I]$. The Young's inequality is $\|f*g\|_r \leq \|f\|_p \|g\|_q$ for $\frac{1}{r} = \frac{1}{p} + \frac{1}{q} - 1$ where $p,q,r > 0$.
\end{lemma}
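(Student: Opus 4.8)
The plan is to establish the inequality by a single application of the generalized Hölder inequality to a suitably factored integrand, followed by Tonelli's theorem. Writing the convolution explicitly as $(f*g)(x)=\int_I f(x-y)g(y)\,\mathrm{d}y$, the first step is to split the modulus of the integrand into three pieces that each live in a different Lebesgue space. Concretely, I would write
\[
|f(x-y)|\,|g(y)| = \Big(|f(x-y)|^{p}|g(y)|^{q}\Big)^{1/r}\,|f(x-y)|^{1-p/r}\,|g(y)|^{1-q/r},
\]
so that the first factor is tailored to land in $\mathcal{L}_r$ after the $x$-integration, while the remaining two factors carry the leftover powers of $f$ and $g$.

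The second step is to apply the three-exponent Hölder inequality in the variable $y$ with exponents $r$, $s$, $t$ chosen so that $\tfrac{1}{s}=\tfrac{1}{p}-\tfrac{1}{r}$ and $\tfrac{1}{t}=\tfrac{1}{q}-\tfrac{1}{r}$. The key bookkeeping check is that these are admissible: using the hypothesis $\tfrac{1}{r}=\tfrac{1}{p}+\tfrac{1}{q}-1$ one verifies $\tfrac{1}{r}+\tfrac{1}{s}+\tfrac{1}{t}=1$, and the exponents on $|f|$ and $|g|$ were chosen precisely so that $|f(x-y)|^{(1-p/r)s}=|f(x-y)|^{p}$ and $|g(y)|^{(1-q/r)t}=|g(y)|^{q}$. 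Hölder then yields the pointwise-in-$x$ bound
\[
|(f*g)(x)| \leq \Big(\int_I |f(x-y)|^{p}|g(y)|^{q}\,\mathrm{d}y\Big)^{1/r}\,\|f\|_p^{p/s}\,\|g\|_q^{q/t},
\]
where translation invariance of the Lebesgue measure is used to identify the second and third factors as powers of $\|f\|_p$ and $\|g\|_q$.

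The third step raises this bound to the power $r$, integrates in $x$ over $I$, and invokes Tonelli's theorem (legitimate since every integrand is nonnegative) to evaluate the surviving double integral as $\int_I |g(y)|^{q}\big(\int_I|f(x-y)|^{p}\,\mathrm{d}x\big)\,\mathrm{d}y = \|f\|_p^{p}\,\|g\|_q^{q}$. Collecting the powers of $\|f\|_p$ and $\|g\|_q$ and simplifying the exponents via $\tfrac{r}{s}+1=\tfrac{r}{p}$ and $\tfrac{r}{t}+1=\tfrac{r}{q}$ reduces the right-hand side to $\|f\|_p^{r}\|g\|_q^{r}$; taking the $r$-th root then gives the claim.

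I expect the main obstacle to be purely the exponent bookkeeping, namely verifying that the three Hölder exponents are nonnegative and sum correctly under the stated constraint, and confirming that the leftover powers collapse exactly to $r$, rather than any genuine analytic difficulty. A secondary point requiring care is the boundary cases in which one of $p$, $q$, $r$ equals $1$ or $\infty$ (so that $s$ or $t$ degenerates, or the convolution must be read as an essential supremum); these follow from the same decomposition by the usual supremum/limiting arguments and can be dispatched separately. Since the application in the proof of Theorem~\ref{tm:Lp} only invokes the special case $q=1$, $r=p$ with an $\mathcal{L}_1$ exponential kernel, the endpoint discussion is not strictly essential for the paper's purposes.
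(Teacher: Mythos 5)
Your proof is correct, but there is nothing in the paper to compare it against: the paper does not prove this lemma, importing it verbatim (together with the Riesz--Thorin theorem) as a well-known result from \cite{mtabbara2007}. On its own merits, your argument is the standard three-exponent H\"older/Tonelli proof, and the bookkeeping checks out: with $\tfrac{1}{s}=\tfrac{1}{p}-\tfrac{1}{r}$ and $\tfrac{1}{t}=\tfrac{1}{q}-\tfrac{1}{r}$, the constraint $\tfrac{1}{r}=\tfrac{1}{p}+\tfrac{1}{q}-1$ gives $\tfrac{1}{r}+\tfrac{1}{s}+\tfrac{1}{t}=1$; the leftover powers satisfy $(1-p/r)s=p$ and $(1-q/r)t=q$; and $p\,(\tfrac{r}{s}+1)=r$, $q\,(\tfrac{r}{t}+1)=r$ collapse the right-hand side to $\|f\|_p^{r}\|g\|_q^{r}$ after Tonelli, exactly as you claim.

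Three small points are worth making explicit. First, admissibility: your exponents require $\tfrac{1}{s}\ge 0$ and $\tfrac{1}{t}\ge 0$, i.e.\ $r\ge\max\{p,q\}$, which follows from the stated constraint only when $p,q\ge 1$; the lemma's hypothesis ``$p,q,r>0$'' as literally written is too weak (for exponents below $1$ H\"older fails and Young's inequality is false in general), so your proof implicitly --- and correctly --- operates in the regime $p,q,r\in[1,\infty]$. Second, on an interval $I$ the translation-invariance step should be read with $f$ extended by zero off $I$, which yields $\int_I|f(x-y)|^{p}\,\mathrm{d}y\le\|f\|_p^{p}$; the inequality rather than equality is all the bound needs. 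Third, your closing observation about usage is accurate up to commutativity of convolution: the proof of Theorem \ref{tm:Lp} invokes the lemma with $p=q=r=1$ for (\ref{eq:L1}) and with $q=r=\infty$, $p=1$ for (\ref{eq:Linfty}) --- in both cases the exponential kernel is the $\mathcal{L}_1$ factor, matching your ``$q=1$, $r=p$'' case after relabeling --- and the general exponent in (\ref{eq:Lp}) is then obtained by Riesz--Thorin interpolation rather than by general-exponent Young, so the degenerate endpoint cases you defer are precisely, and only, the ones the paper actually uses.
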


\begin{theorem} [Riesz-Thorin Interpolation Theorem]		\label{thm:riesz}
		Let $F: \mathbb{R}^{n} \rightarrow \mathbb{R}^{m}$ be a linear operator and suppose that $p_0, p_1, q_0, q_1 \in [1,\infty]$ satisfy $p_0 < p_1$ and $q_0 < q_1$. For any $\theta \in [0,1]$ define $p_{\theta}, q_{\theta}$ by $1/p_{\theta} = (1-{\theta})/p_0 + {\theta}/p_1$ and $1/q_{\theta} = (1-{\theta}/q_0) + {\theta}/q_1$. Then, $\|F\|_{p_{\theta} \rightarrow q_{\theta}} \leq \|F\|^{1-{\theta}}_{p_0 \rightarrow q_0} \|F\|^{{\theta}}_{p_1 \rightarrow q_1}$, where $\|F\|_{p \cdot \rightarrow q \cdot}$ denotes the norm of the mapping $F$ between the $\mathcal{L}_{p \cdot}$ and $\mathcal{L}_{q \cdot}$ space. In particular, if $\|F\|_{p_0 \rightarrow q_0} \leq M_0$ and $\|F\|_{p_1 \rightarrow q_1} \leq M_1$, then $\|F\|_{p_{\theta} \rightarrow q_{\theta}} \leq M_0^{1-{\theta}} M_1^{\theta}$.
\end{theorem}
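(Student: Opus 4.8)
The plan is to prove the statement by complex interpolation, that is, by constructing an analytic family of vectors that connects the two endpoint exponents and then invoking the Hadamard three-lines lemma. (I note in passing that the displayed definition of $1/q_{\theta}$ contains a typo and should read $(1-\theta)/q_0 + \theta/q_1$; I proceed with this intended reading. I also observe that, because $F$ acts between the finite-dimensional spaces $\mathbb{R}^n$ and $\mathbb{R}^m$ equipped with $\ell_p$/$\ell_q$ norms, the density-of-simple-functions and measurability technicalities of the general measure-space version are absent here, which streamlines the argument considerably.)

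First I would reduce the operator-norm estimate to a bilinear one by duality: for every $q \in [1,\infty]$ one has $\|Fx\|_q = \sup\{|\langle Fx,y\rangle| : \|y\|_{q'}=1\}$ with $1/q+1/q'=1$, so it suffices to bound $|\langle Fx,y\rangle|$ for arbitrary $x \in \mathbb{R}^n$, $y \in \mathbb{R}^m$ normalized by $\|x\|_{p_\theta}=\|y\|_{q_\theta'}=1$. Writing $x_j=|x_j|\,\mathrm{sgn}(x_j)$ and $y_k=|y_k|\,\mathrm{sgn}(y_k)$, I would introduce, for $z$ in the closed strip $S:=\{z\in\mathbb{C}: 0\le\Re z\le 1\}$, the exponents $1/p(z):=(1-z)/p_0+z/p_1$ and $1/q'(z):=(1-z)/q_0'+z/q_1'$ together with the analytic families
\begin{align*}
  x_j(z) &:= |x_j|^{\,p_\theta/p(z)}\,\mathrm{sgn}(x_j), &
  y_k(z) &:= |y_k|^{\,q_\theta'/q'(z)}\,\mathrm{sgn}(y_k),
\end{align*}
under the conventions $1/\infty=0$ and $0^0=0$, and with $x(z)\equiv x$ (resp. $y(z)\equiv y$) whenever $p_0=p_1=\infty$ (resp. $q_0'=q_1'=\infty$). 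At $z=\theta$ these reduce to $x$ and $y$, so the scalar function $\Phi(z):=\langle Fx(z),y(z)\rangle=\sum_{k,j}F_{kj}\,x_j(z)\,y_k(z)$ satisfies $\Phi(\theta)=\langle Fx,y\rangle$.

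Next I would verify the hypotheses of the three-lines lemma for $\Phi$. As a finite sum of terms $|x_j|^{p_\theta/p(z)}|y_k|^{q_\theta'/q'(z)}$ times constants, $\Phi$ is entire, and since $\Re(1/p(z))$ and $\Re(1/q'(z))$ depend only on $\Re z$, the moduli $|x_j(z)|,|y_k(z)|$ are bounded on $S$ and hence so is $\Phi$. On the line $\Re z=0$ one has $|x_j(is)|=|x_j|^{p_\theta/p_0}$, so $\|x(is)\|_{p_0}^{p_0}=\sum_j|x_j|^{p_\theta}=\|x\|_{p_\theta}^{p_\theta}=1$, and likewise $\|y(is)\|_{q_0'}=1$; Hölder's inequality and the endpoint bound then give $|\Phi(is)|\le\|Fx(is)\|_{q_0}\|y(is)\|_{q_0'}\le M_0$, and symmetrically $|\Phi(1+is)|\le M_1$. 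Applying the three-lines lemma (equivalently, the maximum-modulus principle to the bounded function $z\mapsto\Phi(z)M_0^{z-1}M_1^{-z}$ on $S$) yields $|\Phi(\theta)|\le M_0^{1-\theta}M_1^{\theta}$, i.e. $|\langle Fx,y\rangle|\le M_0^{1-\theta}M_1^{\theta}$; taking the supremum over normalized $x,y$ gives $\|F\|_{p_\theta\to q_\theta}\le M_0^{1-\theta}M_1^{\theta}$, and specializing $M_0=\|F\|_{p_0\to q_0}$, $M_1=\|F\|_{p_1\to q_1}$ produces the claim.

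I expect the principal obstacle to be the bookkeeping at the infinite endpoints: ensuring that the substitutions $x_j(z),y_k(z)$, the norm identities $\|x(is)\|_{p_0}=1$, and the chosen conventions remain consistent when any of $p_0,p_1,q_0,q_1$ equals $\infty$, and confirming that the duality formula $\|\cdot\|_q=\sup_{\|\cdot\|_{q'}=1}|\langle\cdot,\cdot\rangle|$ is valid across the whole range $q\in[1,\infty]$, including $q=\infty$. The complex-analytic core — analyticity of $\Phi$, its boundedness on $S$, and the three-lines estimate — is routine once the family is correctly set up, and the finite dimensionality removes the usual approximation step, so the endpoint handling is where the real care must go.
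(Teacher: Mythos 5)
The first thing to note is that the paper does not prove this statement at all: it imports Riesz--Thorin as a known result (citing \cite{mtabbara2007}) and only uses it once, in the proof of Theorem~\ref{tm:Lp}, with the diagonal endpoints $p_0=q_0=1$, $p_1=q_1=\infty$ and $M_0=M_1=\frac{2}{\lambda}\sqrt{M}$. So your proposal is measured against the standard literature proof, and indeed what you write is the classical Thorin argument: duality reduction to the bilinear form, the analytic family $x_j(z)=|x_j|^{p_\theta/p(z)}\,\mathrm{sgn}(x_j)$, $y_k(z)=|y_k|^{q'_\theta/q'(z)}\,\mathrm{sgn}(y_k)$, and the three-lines lemma. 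Your bookkeeping is correct where you expected trouble: the typo fix $1/q_\theta=(1-\theta)/q_0+\theta/q_1$ is right, the boundary norm identities $\|x(is)\|_{p_0}=1$, $\|y(is)\|_{q_0'}=1$ hold with your conventions at the infinite endpoints, duality $\|\cdot\|_q=\sup_{\|y\|_{q'}=1}|\langle\cdot,y\rangle|$ is unproblematic in finite dimensions for all $q\in[1,\infty]$, and entirety and boundedness of $\Phi$ on the strip are as you say (modulo the trivial $\max(M_i,\epsilon)$ device if an endpoint norm vanishes).

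There is, however, one genuine gap, and it is not at the infinite endpoints but in the scalar field. The statement is for a \emph{real} operator $F:\mathbb{R}^n\rightarrow\mathbb{R}^m$, with $M_0,M_1$ bounding its norms on real vectors; yet for $z$ off the line $\Re z=\theta$ your vectors $x(z),y(z)$ are genuinely complex, so the step $|\Phi(is)|\leq\|Fx(is)\|_{q_0}\|y(is)\|_{q_0'}\leq M_0$ silently invokes the endpoint bound for the \emph{complexification} $F_{\mathbb{C}}$. This is not free: complexification can strictly inflate $\ell_p\rightarrow\ell_q$ norms when $q<p$. For example, $A=\bigl[\begin{smallmatrix}1&-1\\1&1\end{smallmatrix}\bigr]$ has real $\ell_\infty\rightarrow\ell_1$ norm $\sup_{\|x\|_\infty\leq1}\bigl(|x_1-x_2|+|x_1+x_2|\bigr)=2$, but $z=(1,i)$ gives $\|Az\|_1=2\sqrt{2}$ with $\|z\|_\infty=1$. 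So as written your argument proves the complex-scalar theorem; to obtain the stated real version you must either verify $\|F_{\mathbb{C}}\|_{p_j\rightarrow q_j}=\|F\|_{p_j\rightarrow q_j}$ at the endpoints --- which holds whenever $p_j\leq q_j$, e.g.\ by the rotation identity $|w|^r=c_r^{-1}\frac{1}{2\pi}\int_0^{2\pi}|\Re(e^{i\phi}w)|^r\,\mathrm{d}\phi$ combined with Minkowski's integral inequality, and in particular holds in the diagonal case $p_j=q_j$ that the paper actually uses --- or else accept a multiplicative constant in the conclusion for general exponents. Without one of these additions, the boundary estimate does not follow from the hypotheses as stated.
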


\begin{proof}
	From the UGES assumption of the theorem, we infer that the fundamental matrix $\Phi(t,t_0)$ satisfies:
	\begin{align*}
		\|\Phi(t,t_0)\| \leq  \sqrt{M} e^{-\frac{\lambda}{2} (t-t_0)}, \qquad \forall t \geq t_0,
	\end{align*}
	uniformly in $t_0$. Refer to \cite[Definition 3.]{aanokhin1995} for the exact definition of a fundamental matrix. Now, \cite[Theorem 3.1.]{aanokhin1995} provides
	\begin{align}
		\xi(t) &= \Phi(t,t_0) \xi(t_0) + \int_{t_0}^{t} \Phi(t,s) \tilde{u}(s) \mathrm{d}s \nonumber \\
					 &+ \int_{t_0}^{t} \Phi(t,s) 	a \xi_s(-d(t)) \mathrm{d}s \nonumber \\
					 &+ \sum_{t_0 < t_i \leq t} \Phi(t,t_i) \tilde{\nu}(t_i), \qquad \qquad \qquad \forall t \geq t_0,	\label{eq:linear op riesz}
	\end{align}
	where $\xi_s(-d(t))=0$ when $s-d(t) \geq t_0$. The above equality along with
	$$\sum_{t_0 < t_i \leq t} e^{-\frac{\lambda}{2} (t-t_i)} \leq \sum_{i=1}^{\infty} e^{-\frac{\lambda}{2} \varepsilon i}  = \frac{1}{e^{\frac{\lambda \varepsilon}{2}} - 1}$$
	immediately yields
	\begin{align}
		\|\xi(t)&\| \leq \|\Phi(t,t_0)\| \|\xi_{t_0}\| + \int_{t_0}^{t} \|\Phi(t,s)\| \|\tilde{u}(s)\| \mathrm{d}s \nonumber \\
								&+ \int_{t_0}^{t} \|\Phi(t,s)\| |a| \|\xi_s(-d(t))\| \mathrm{d}s \nonumber \\
								&+ \sum_{t_0 < t_i \leq t} \|\Phi(t,t_i)\| \|\tilde{\nu}(t_i)\|		\nonumber \\
						    &\leq \sqrt{M} e^{-\frac{\lambda}{2} (t-t_0)} \|\xi_{t_0}\| + \sqrt{M} \int_{t_0}^{t} e^{-\frac{\lambda}{2} (t-s)} \|\tilde{u}(s)\| \mathrm{d}s \nonumber \\
						    &+ |a| \sqrt{M} \|\xi_{t_0}\| \int_{t_0}^{t_0 + d(t)}  e^{-\frac{\lambda}{2} (t-s)} \mathrm{d}s + \nonumber \\
						    &+ \tilde{K}_{\nu} \sqrt{M} \frac{1}{e^{\frac{\lambda \varepsilon}{2}} - 1},  \qquad \qquad \qquad  \forall t \geq t_0, \label{eq:convolution}
	\end{align}
	uniformly in $t_0$.
	
	Let us now estimate the contribution of the initial condition $\xi_{t_0}$ towards $\|\xi[t_0,t]\|_p$ by setting $\tilde{u} \equiv 0$ and $\tilde{K}_{\nu}=0$. In other words, we have
	\begin{align}
		\|\xi(t)\| &\leq  \sqrt{M} e^{-\frac{\lambda}{2} (t-t_0)} \|\xi_{t_0}\| \nonumber \\
							 &+	|a| \sqrt{M} \|\xi_{t_0}\| \int_{t_0}^{t_0 + \breve{d}}  e^{-\frac{\lambda}{2} (t-s)} \mathrm{d}s, \quad \forall t \geq t_0.		\label{eq:const init}
	\end{align}
	In what follows, we use $(a+b)^p \leq 2^{p-1} a^p + 2^{p-1} b^p$ and $(a+b)^{\frac{1}{p}} \leq a^{\frac{1}{p}} + b^{\frac{1}{p}}$, where $a,b\geq 0$ and $p \in [1, \infty)$ (see, for example, \cite[Lemma 1 \& 2]{rfreeman2004}). Raising (\ref{eq:const init}) to the $p^{\mathrm{th}} \in [1,\infty)$ power, integrating over $[t_{0},t]$ and taking the $p^{\mathrm{th}}$ root yields
	\begin{align*}
		\|\xi[t_0,t]\|_p		&\leq \Bigg( \sqrt{M}  +  |a| \sqrt{M} \frac{2}{\lambda} \big(e^{\frac{\breve{d} \lambda}{2}} - 1 \big) \Bigg) 2^{\frac{p-1}{p}} \times \nonumber \\
												&\times		\Big( \frac{2}{p \lambda} \Big)^{\frac{1}{p}} \|\xi_{t_0}\|, \qquad \qquad \qquad \forall t \geq t_0,
	\end{align*}
	where we used
	\begin{align}
		\Bigg( \int_{t_0}^{\infty} \Big( \int_{t_0}^{t_0 + \breve{d}}  e^{-\frac{\lambda}{2} (t-s)} \mathrm{d}s \Big)^{p} \mathrm{d}t \Bigg)^{\frac{1}{p}} = \frac{2}{\lambda} \big(e^{\frac{\breve{d} \lambda}{2}} - 1 \big) \Big( \frac{2}{p \lambda} \Big)^{\frac{1}{p}}.
	\end{align}
	When $p=\infty$, simply take the limit $$\lim_{p \rightarrow \infty} 2^{\frac{p-1}{p}} \Big( \frac{2}{p \lambda} \Big)^{\frac{1}{p}} = 2.$$
	
	Let us now estimate the contribution of the input $\tilde{u}(t)$ towards $\|\xi[t_0,t]\|_p$ by setting $\|\xi_{t_0}\| = 0$ and $\tilde{K}_{\nu}=0$. In other words, we have
	\begin{align}
		\|\xi(t)\| \leq \sqrt{M} \int_{t_0}^{t} e^{-\frac{\lambda}{2} (t-s)} \|\tilde{u}(s)\| \mathrm{d}s,  \qquad \forall t \geq t_0. \label{eq:y contrib}
	\end{align}
	Note that $\int_{0}^{\infty}  e^{-\frac{\lambda}{2}s} \mathrm{d}s = \frac{2}{\lambda}$. Now, integrating the previous inequality over $[t_0,t]$ and using Lemma \ref{lemma:young} with $p=q=r=1$ yields the $\mathcal{L}_1$-norm estimate:
	\begin{align}
		\| \xi[t_{0},t] \|_1 \leq  \frac{2}{\lambda} \sqrt{M} \|\tilde{u} [t_{0},t] \|_1,  \qquad \qquad \forall t \geq t_0.		\label{eq:L1}
	\end{align}
	Taking the $\max$ over $[t_0,t]$ in (\ref{eq:y contrib}) and using Lemma \ref{lemma:young} with $q=r=\infty$ and $p=1$ yields the $\mathcal{L}_{\infty}$-norm estimate:
	\begin{align}
		\| \xi[t_{0},t] \|_{\infty} \leq  \frac{2}{\lambda} \sqrt{M} \|\tilde{u} [t_{0},t] \|_{\infty},  \qquad \quad \forall t \geq t_0.		\label{eq:Linfty}
	\end{align}
	From (\ref{eq:linear op riesz}), one infers that we are dealing with a linear operator, say $F$, that maps $\tilde{u}$ to $e$ with bounds for the norms $\|F\|_1 \leq \|F\|_1^*$ and $\|F\|_{\infty} \leq \|F\|_{\infty}^*$, where $\|F\|_1^*$ and $\|F\|_{\infty}^*$ are given by (\ref{eq:L1}) and (\ref{eq:Linfty}), respectively. Because $\|F\|_1^* = \|F\|_{\infty}^*$, Theorem \ref{thm:riesz} gives that $\|F \|_p \leq \|F\|^*_1=\|F\|_{\infty}^*$ for all $p \in [1,\infty]$. This yields
		\begin{align}
		\| \xi[t_{0},t] \|_{p} \leq  \frac{2}{\lambda} \sqrt{M} \|\tilde{u} [t_{0},t] \|_{p},  \qquad \qquad \forall t \geq t_0,		\label{eq:Lp}
	\end{align}
	for any $p \in [1,\infty]$.

	Let us now estimate the contribution of the noise $\tilde{\nu}(t)$ towards $\|\xi[t_0,t]\|_p$ by setting $\|\xi_{t_0}\| = 0$ and $\tilde{u} \equiv 0$. In other words, we have
	\begin{align*}
		\|\xi(t)\| \leq \tilde{K}_{\nu} \sqrt{M} \frac{1}{e^{\frac{\lambda \varepsilon}{2}} - 1},  \qquad \qquad \forall t \geq t_0.
	\end{align*}
	By identifying $b(t) \equiv b := \tilde{K}_{\nu} \sqrt{M} \frac{1}{e^{\frac{\lambda \varepsilon}{2}} - 1}$, we immediately obtain
	\begin{align*}
		\|\xi[t_0,t]\|_p \leq \|b[t_0,t]\|_p,  \qquad \qquad \qquad \forall t \geq t_0,
	\end{align*}
	for all $p \in [1,\infty]$.
	
	Finally, summing up the contributions of $\xi_{t_0}$, $\tilde{u}(t)$ and $\tilde{\nu}(t)$ produces
	\begin{align*}
		\|\xi[t_0,t]\|_p  &\leq 2 \sqrt{M} \Bigg( 1 + |a| \frac{2}{\lambda} \big(e^{\frac{\breve{d} \lambda}{2}} - 1 \big) \Bigg) \Big( \frac{1}{p \lambda} \Big)^{\frac{1}{p}} \|\xi_{t_0}\| \nonumber \\
											&+ \frac{2}{\lambda} \sqrt{M} \|\tilde{u} [t_{0},t] \|_{p} + \|b[t_0,t]\|_p, \qquad \forall t \geq t_0,
	\end{align*}
	for any $p \in [1,\infty]$.
\end{proof}

\subsection{Proof of Theorem \ref{thm:main}}

\begin{proof}
	Combining (i) of UGES protocols and (\ref{eq:complex}), one obtains:
	\begin{align}
		\Big\langle \frac{\partial W(i,e)}{\partial e}, g(t,x_t,e_t,\omega_t) \Big\rangle \leq& \frac{\overline{a}}{\underline{a}} L W(j,e(t-d(t))) \nonumber \\
																																												  &+ \|H(x_t,\omega_t)\|.	\label{eq:complex 2}
	\end{align}
	for any $i,j \in \mathbb{N}$. Hence, the index $i$ in $W(i,e)$ can be omitted in what follows. Now, we define $Z(t):=W(e(t))$ and reach
	\begin{align}
		\frac{d Z(t)}{d t} &\leq \frac{\overline{a}}{\underline{a}} L Z(t-d(t)) + \|H(x_t,\omega_t)\|,	\label{eq:w flow}
	\end{align}
	for almost all $t \notin \mathcal{T}$. For a justification of the transition from (\ref{eq:complex 2}) to (\ref{eq:w flow}), refer to \cite[Footnote 8]{dnesic2004}. Likewise, property (ii) of UGES protocols yields
	\begin{align}
		Z(t^+) \leq \rho Z(t) + \overline{a} \nu_j(t),	\label{eq:w jump}
	\end{align}
	for all $t \in \mathcal{T}$, where $\nu_j(t)$, $j \in \{1,\ldots,l\}$, is the $j^{\mathrm{th}}$ NCS link noise given by (\ref{eq:noise}) and upper bounded with $K_{\nu}$. Notice that $|Z(t)|=|W(e(t))|$. Next, we use the comparison lemma for impulsive delayed systems \cite[Lemma 2.2]{xliu2005}. Basically, the fundamental matrix of (\ref{eq:w flow})-(\ref{eq:w jump}) is upper bounded with the fundamental matrix of (\ref{eq:LTI}) with parameters $a:=\frac{\overline{a}}{\underline{a}} L$ and $c:=\rho$. Refer to \cite[Definition 3.]{aanokhin1995} for the exact definition of a fundamental matrix. Of course, the corresponding transmission interval $\tau$ in (\ref{eq:LTI jump}), and therefore in (\ref{eq:w jump}), has to allow for $M>1$ and $\lambda > 0$ that satisfy (I), (II) and $\frac{2}{\lambda} \sqrt{M} \gamma_H < 1$ (as stated in Theorem \ref{thm:main}). Essentially, (I) and (II) yield $\mathcal{L}_p$-stability with bias from $H$ to $W$, while $\frac{2}{\lambda} \sqrt{M} \gamma_H < 1$ allows us to invoke the small-gain theorem. Following the proof of Theorem \ref{tm:Lp}, one readily establishes $\mathcal{L}_p$-stability from $H$ to $W$ with bias, i.e.,
	\begin{align}
			\|W[t_0,t]\|_p \leq K_W \|W_{t_0}\| + \gamma_W \|H [t_{0},t] \|_{p} + \|b[t_0,t]\|_p,		\label{eq:Lp W}
	\end{align}
	for any $t \geq t_0$ any $p \in [1,\infty]$, where 
	\begin{align*}
		K_W &:= 2 \sqrt{M} \Bigg( 1 + \frac{\overline{a}}{\underline{a}} L \frac{2}{\lambda} \big(e^{\frac{\breve{d} \lambda}{2}} - 1 \big) \Bigg) \Big( \frac{1}{p \lambda} \Big)^{\frac{1}{p}}, \nonumber \\
		\gamma_W &:= \frac{2}{\lambda} \sqrt{M},	\quad b := \frac{\overline{a} K_{\nu} \sqrt{M}}{e^{\frac{\lambda \varepsilon}{2}} - 1}.
	\end{align*}
	
	Let us now infer $\mathcal{L}_p$-stability with bias from $\omega$ to $(H,e)$ via the small-gain theorem. Inequality (\ref{eq:Lp H}) implies
	\begin{align}
		\|H[t,t_0]\|_p \leq& K_H \|x_{t_0}\| + \gamma_H \|W[t,t_0]\|_p + \gamma_H \|\omega[t_0,t]\|_p,		\label{eq:Lp H 2}
	\end{align}
	for all $t \geq t_0$. Combining the above with (\ref{eq:Lp W}) and property (i) of UGES protocols yields
	\begin{align}
		&\|e[t_0,t]\|_p \leq \frac{\overline{a} K_W / \underline{a} }{(1-\gamma_W \gamma_H)}\|e_{t_0}\| + \frac{\gamma_W K_H / \underline{a}}{(1-\gamma_W \gamma_H)} \|x_{t_0}\| \nonumber \\
		&+ \frac{\gamma_W \gamma_H / \underline{a}}{(1-\gamma_W \gamma_H)} \|\omega[t_0,t]\|_p + \frac{1/ \underline{a}}{(1-\gamma_W \gamma_H)} \|b[t_0,t]\|_p, \label{eq:Lp e} 
	\end{align}
	\begin{align}
		&\|H[t_0,t]\|_p \leq \frac{K_H}{1-\gamma_W \gamma_H} \|x_{t_0}\| + \frac{\overline{a} K_H K_W}{1-\gamma_W \gamma_H} \|e_{t_0}\| \nonumber \\
		&+ \frac{\gamma_H}{1-\gamma_W \gamma_H} \|\omega[t_0,t\|_p + \frac{\gamma_H}{1-\gamma_W \gamma_H} \|b[t_0,t\|_p.	\nonumber
	\end{align}
	From the above two inequalities, $\mathcal{L}_p$-stability from $\omega$ to $(H,e)$ with bias $\frac{\gamma_H + \frac{1}{\underline{a}}}{1-\gamma_W \gamma_H} \frac{\overline{a} K_{\nu} \sqrt{M}}{e^{\frac{\lambda \varepsilon}{2}} - 1}$ and gain $\frac{\gamma_H (1+\frac{\gamma_W}{\underline{a}})}{1-\gamma_W \gamma_H}$ is immediately obtained.
\end{proof}

\subsection{Proof of Corollary \ref{cor:main}}

\begin{proof}
	The $\mathcal{L}_p$-detectability of $x$ from $(W,\omega,H)$ implies that there exist $K_d,\gamma_d \geq 0$ such that
	\begin{align}
		\|x[t_{0},t]\|_p \leq& K_d \|x_{t_0}\| + \gamma_d \|H[t_{0},t]\|_p + \gamma_d \|(W,\omega)[t_{0},t]\|_p	\nonumber \\
										 \leq&  K_d \|x_{t_0}\| + \gamma_d \|H[t_{0},t]\|_p + \gamma_d \|W[t_{0},t]\|_p \nonumber \\
										  +& \gamma_d \|\omega[t_{0},t]\|_p 			\label{eq:Lp x}	
	\end{align}
	for all $t \geq t_{0}$. Plugging (\ref{eq:Lp H 2}) into (\ref{eq:Lp x}) leads to
	\begin{align*}
		\|x[&t_{0},t]\|_p \leq K_d \|x_{t_0}\| + \gamma_d K_H \|x_{t_0}\| + \gamma_d \gamma_H \|W[t,t_0]\|_p \nonumber \\
				&+ \gamma_d \gamma_H \|\omega[t_0,t]\|_p + \gamma_d \|W[t_{0},t]\|_p + \gamma_d \|\omega[t_{0},t]\|_p \nonumber \\
				&\leq K_d \|x_{t_0}\| + \gamma_d K_H \|x_{t_0}\| + (\overline{a} \gamma_d \gamma_H + \overline{a} \gamma_d) \|e[t,t_0]\|_p \nonumber \\
				&+ (\gamma_d \gamma_H + \gamma_d) \|\omega[t_{0},t]\|_p
	\end{align*}
	for all $t \geq t_{0}$. Finally, we include (\ref{eq:Lp e}) into (\ref{eq:Lp x}) and add the obtained inequality to (\ref{eq:Lp e}) which establishes $\mathcal{L}_p$-stability with bias from $\omega$ to $(x,e)$.
\end{proof}

\subsection{Proof of Proposition \ref{prop:UGAS} }

\begin{proof}
	For the case $p=\infty$, UGS of the interconnection $\Sigma_n$ and $\Sigma_e$ is immediately obtained using the definition of $\mathcal{L}_{\infty}$-norm. Therefore, the case $p \in [1,\infty)$ is more interesting. From the conditions of the proposition, we know that there exist $K \geq 0$ and $\gamma \geq 0$ such that
	\begin{align*}
		\|(x,e)[t_0,t]\|_p \leq K \|(x_{t_0},e_{t_0})\| + \gamma \|\omega[t_0,t]\|_p,	\;	\forall t \geq t_0.
	\end{align*}
	Recall that $\omega \equiv\textbf{0}_{n_{\omega}}$ when one is interested in asymptotic stability. By raising both sides of the above inequality to the $p^{\mathrm{th}}$ power, we obtain:
	\begin{align}
		\int_{t_0}^t \|(x,e)(s)\|^p \mathrm{d}s \leq K_1 \|(x_{t_0},e_{t_0})\|^p, \qquad \forall t \geq t_0,		\label{eq:for UGAS}
	\end{align}
	where $K_1:=K^p$.
	
	First, we need to establish UGS of the interconnection $\Sigma_n$ and $\Sigma_e$ when $\omega \equiv\textbf{0}_{n_{\omega}}$. Before we continue, note that the jumps in (\ref{eq:Sigman}) and (\ref{eq:Sigmae}) are such that $\|(x,e)(t^+)\| \leq \|(x,e)(t)\|$ for each $t \in \mathcal{T}$. Apparently, jumps are not destabilizing and can be disregarded in what follows. Along the lines of the proof for \cite[Theorem 1]{edsontag1998}, we pick any $\epsilon > 0$. Let $\mathcal{K}$ be the set $\{ (x,e) \big| \frac{\epsilon}{2} \leq \|(x,e)\| \leq \epsilon \}$, $\mathcal{K}_1$ be the set $\{ (x,e) \big| \|(x,e)\| \leq \epsilon \}$, and take
$$a:=\sup_{ \substack{t \in \mathbb{R}\\ (x,e)(t) \in \mathcal{K}\\ (x,e)_t \in PC([-\overline{d},0],\mathcal{K}_1)}} \|\big( f(t,x_t,e,\textbf{0}_{n_{\omega}}),g(t,x_t,e_t,\tilde{\textbf{0}}_{n_{\omega}}) \big)\|,$$
where $f(t,x_t,e,\mathbf{0}_{n_{\omega}})$ and $g(t,x_t,e_t,\tilde{\mathbf{0}}_{n_{\omega}})$ are given by (\ref{eq:nesic 1}) and (\ref{eq:nesic 2}), respectively. This supremum exists because the underlying dynamics are Lipschitz uniformly in $t$. Next, choose $0 < \delta < \frac{\epsilon}{2}$ such that $r < \delta$ implies $K_1 \|(x_{t_0},e_{t_0})\|^p < s_0$, where $s_0 := \frac{\epsilon (\epsilon/2)^p}{2a}$ . Let $\|(x_{t_0},e_{t_0})\|_d < \delta$. Then $\|(x,e)(t)\| < \epsilon$ for all $t \geq t_0$. Indeed, suppose that there exists some $t > t_0$ such that $\|(x,e)(t)\| \geq \epsilon$. Then, there is an interval $[t_1, t_2]$ such that $\|(x,e)(t_1)\|=\frac{\epsilon}{2}$, $\|(x,e)(t_2)\|=\epsilon$, $(x,e)(t) \in \mathcal{K}$, $(x,e)_t \in PC([-\overline{d},0],\mathcal{K}_1)$, for all $t \in [t_1,t_2]$. Hence,
	\begin{align*}
		\int_{t_0}^{\infty} \|(x,e)(s)\|^p \mathrm{d}s \geq \int_{t_1}^{t_2} \|(x,e)(s)\|^p \mathrm{d}s \geq (t_2 - t_1) \Big( \frac{\epsilon}{2} \Big)^p.
	\end{align*}
	On the other hand,
	\begin{align*}
		\frac{\epsilon}{2} &\leq \|(x,e)(t_2)-(x,e)(t_1)\| \nonumber \\
											 &\leq \int_{t_1}^{t_2}  \|(f(t,x_t,e,\textbf{0}_{n_{\omega}}),g(t,x_t,e_t,\tilde{\textbf{0}}_{n_{\omega}}))\| \mathrm{d}s \nonumber \\
											 &\leq a(t_2-t_1),
	\end{align*}
	and, combining the above with (\ref{eq:for UGAS}), we conclude that
	\begin{align*}
		s_0 \leq \int_{t_0}^{\infty} \|(x,e)(s)\|^p \mathrm{d}s \leq K_1 \|(x_{t_0},e_{t_0})\|^p,
	\end{align*}
	which is a contradiction.
	
	Second, let us show asymptotic convergence of $\|(x,e)(t)\|$ to zero. Using (\ref{eq:for UGAS}), we infer that $(x,e)(t) \in \mathcal{L}_p$. Owing to the Lipschitz dynamics of the corresponding system, we conclude that $(\dot{x},\dot{e})(t) \in \mathcal{L}_p$ as well. Now, one readily establishes asymptotic convergence of $\|x\|$ to zero using \cite[Facts 1-4]{arteel1999}. Consequently, asymptotic convergence of $\|e\|$ to zero follows from (\ref{eq:convolution}) by observing that $\xi(t)$ and $\tilde{u}(t)$ in (\ref{eq:convolution}) correspond to $W(t)$ and $H(t)$, respectively, and using (i) of Definition \ref{def:UGES pro}.
\end{proof}

\bibliographystyle{abbrv}
\bibliography{jsandra_ref}

\begin{thebibliography}{10}

\bibitem{aanokhin1995}
A.~Anokhin, L.~Berezansky, and E.~Braverman.
\newblock Exponential stability of linear delay impulsive differential
  equations.
\newblock {\em Journal of Mathematical Analysis and Applications},
  193(3):923--941, 1995.

\bibitem{ghballinger1999}
G.~H. Ballinger.
\newblock {\em Qualitative Theory of Impulsive Delay Differential Equations}.
\newblock PhD thesis, Univ. of Waterloo, Canada, 1999.

\bibitem{dpborgers2014}
D.~P. Borgers and W.~P. M.~H. Heemels.
\newblock Stability analysis of large-scale networked control systems with
  local networks: A hybrid small-gain approach.
\newblock In {\em Hybrid Systems: Computation and Control (HSCC)}, pages
  103--112, 2014.

\bibitem{sboyd1994}
S.~Boyd, L.~El-Ghaoui, E.~Feron, and V.~Balakrishnan.
\newblock {\em Linear matrix inequalities in systems and control theory}.
\newblock SIAM, Philadelphia, 1994.

\bibitem{dchristmann2014}
D.~Christmann, R.~Gotzhein, S.~Siegmund, and F.~Wirth.
\newblock Realization of {T}ry-{O}nce-{D}iscard in wireless multihop networks.
\newblock {\em IEEE Transactions on Industrial Informatics}, 10(1):17--26, Feb
  2014.

\bibitem{dfcoutinho2008}
D.~F. Coutinho and C.~E. de~Souza.
\newblock Delay-dependent robust stability and $\mathcal{L}_2$-gain analysis of
  a class of nonlinear time-delay systems.
\newblock {\em Automatica}, 44(8):2006--2018, 2008.

\bibitem{vsdolk2014}
V.~S. Dolk, D.~P. Borgers, and W.~P. M.~H. Heemels.
\newblock Dynamic event-triggered control: Tradeoffs between transmission
  intervals and performance.
\newblock In {\em Proceedings of the IEEE Conference on Decision and Control},
  pages 2764--2769, Los Angeles, CA, December 2014.

\bibitem{testrada2010}
T.~Estrada and P.~J. Antsaklis.
\newblock Model-based control with intermittent feedback: Bridging the gap
  between continuous and instantaneous feedback.
\newblock {\em Int. J. of Control}, 83(12):2588--2605, 2010.

\bibitem{rfreeman2004}
R.~A. Freeman.
\newblock On the relationship between induced $l^{\infty}$-gain and various
  notions of dissipativity.
\newblock In {\em Proceedings of the IEEE Conference on Decision and Control},
  pages 3430--3434, 2004.

\bibitem{jkhale1977}
J.~K. Hale.
\newblock {\em Theory of functional differential equations}, volume~3 of {\em
  Applied Mathematical Sciences}.
\newblock Springer-Verlag, New York, 1977.

\bibitem{wpmhheemels2010}
W.~P. M.~H. Heemels, A.~R. Teel, N.~V. de~Wouw, and D.~Ne\v{s}i\'{c}.
\newblock Networked {C}ontrol {S}ystems with communication constraints:
  Tradeoffs between transmission intervals, delays and performance.
\newblock {\em IEEE Tran. on Automatic Control}, 55(8):1781--1796, 2010.

\bibitem{IEEEhowto:jhespanha}
J.~P. Hespanha, P.~Naghshtabrizi, and X.~Yonggang.
\newblock A survey of recent results in {N}etworked {C}ontrol {S}ystems.
\newblock {\em Proceedings of the IEEE}, 95(1):138 -- 162, January 2007.

\bibitem{zjiang1994}
Z.~P. Jiang, A.~R. Teel, and L.~Praly.
\newblock Small-gain theorem for {ISS} systems and applications.
\newblock {\em Mathematics of Control, Signals and Systems}, 7(2):95--120,
  1994.

\bibitem{IEEEhowto:hkhalil}
H.~Khalil.
\newblock {\em Nonlinear Systems}.
\newblock Prentice Hall, 3rd edition, 2002.

\bibitem{akruszewski2012}
A.~Kruszewski, W.~J. Jiang, E.~Fridman, J.~P. Richard, and A.~Toguyeni.
\newblock A switched system approach to exponential stabilization through
  communication network.
\newblock {\em IEEE Trans. on Automatic Control}, 20(4):887--900, 2012.

\bibitem{xliu2005}
X.~Liu, X.~Shen, and Y.~Zhang.
\newblock A comparison principle and stability for large-scale impulsive delay
  differential systems.
\newblock {\em Anziam journal: The Australian \& New Zealand industrial and
  applied mathematics journal}, 47(2):203--235, 2005.

\bibitem{mmamduhiCDC2014}
M.~H. Mamduhi, D.~Toli\'{c}, A.~Molin, and S.~Hirche.
\newblock Event-triggered scheduling for stochastic multi-loop networked
  control systems with packet dropouts.
\newblock In {\em Proceedings of the IEEE Conference on Decision and Control},
  pages 2776--2782, Los Angeles, CA, December 2014.

\bibitem{nmartins2006}
N.~C. Martins.
\newblock Finite gain $l_p$ stability requires analog control.
\newblock {\em Systems and Control Letters}, 55(11):949--954, 2006.

\bibitem{fmazenc2013}
F.~Mazenc, M.~Malisoff, and T.~N. Dinh.
\newblock Robustness of nonlinear systems with respect to delay and sampling of
  the controls.
\newblock {\em Automatica}, 49(6):1925--1931, 2013.

\bibitem{dnesic2004}
D.~Ne{\v s}i\'{c} and A.~R. Teel.
\newblock Input-output stability properties of {N}etworked {C}ontrol {S}ystems.
\newblock {\em IEEE Transactions on Automatic Control}, 49(10):1650--1667,
  October 2004.

\bibitem{siniculescu2001}
S.-I. Niculescu.
\newblock {\em Delay Effects on Stability: A Robust Control Approach}.
\newblock Number 269 in Lecture Notes in Control and Information Sciences.
  Springer-Verlag, London, 2001.

\bibitem{hsmith2011}
H.~Smith.
\newblock {\em An Introduction to Delay Differential Equations with
  Applications to the Life Sciences}, volume~57 of {\em Texts in Applied
  Mathematics}.
\newblock Springer, New York, 2011.

\bibitem{edsontag1998}
E.~D. Sontag.
\newblock Comments on integral variants of {ISS}.
\newblock {\em Systems \& Control Letters}, 34(1–2):93 -- 100, 1998.

\bibitem{mtabbara2007}
M.~Tabbara, D.~Ne{\v s}i\'{c}, and A.~R. Teel.
\newblock Stability of wireless and wireline networked control systems.
\newblock {\em IEEE Transactions on Automatic Control}, 52(9):1615--1630,
  September 2007.

\bibitem{ptallapragada2011}
P.~Tallapragada and N.~Chopra.
\newblock On event triggered trajectory tracking for control affine nonlinear
  systems.
\newblock In {\em Proceedings of the IEEE Conference on Decision and Control},
  pages 5377--5382, December 2011.

\bibitem{arteel1999}
A.~R. Teel.
\newblock Asymptotic convergence from $\mathcal{L}_p$ stability.
\newblock {\em IEEE Transactions on Automatic Control}, 44(11):2169--2170,
  1999.

\bibitem{dtolicCDC2014}
D.~Toli\'{c} and S.~Hirche.
\newblock Stabilizing transmission intervals and delays for nonlinear networked
  control systems: The large delay case.
\newblock In {\em Proceedings of the IEEE Conference on Decision and Control},
  pages 1203--1208, Los Angeles, CA, December 2014.

\bibitem{dtolicTAC2017}
D.~Toli\'{c} and S.~Hirche.
\newblock Stabilizing transmission intervals for nonlinear delayed networked
  control systems.
\newblock {\em IEEE Trans. on Automatic Control}, April 2017.
\newblock to appear.

\bibitem{dtolicMED12}
D.~Toli\'{c}, R.~Sanfelice, and R.~Fierro.
\newblock Self-triggering in nonlinear systems: A small-gain theorem approach.
\newblock In {\em IEEE Mediterranean Conf. Contr. and Automat.}, pages
  935--941, Barcelona, Spain, 2012.

\bibitem{dtolicIJRNC2014}
D.~Toli\'{c}, R.~G. Sanfelice, and R.~Fierro.
\newblock Input-output triggered control using {L}p-stability over finite
  horizons.
\newblock {\em Int. J. Robust and Nonlinear Control}, 25(14):2299--2327, 2015.

\bibitem{jtsinias1991}
J.~Tsinias.
\newblock A theorem on global stabilization of nonlinear systems by linear
  feedback.
\newblock {\em Systems $\&$ Control Letters}, 17(5):357--362, 1991.

\bibitem{gcwalsh2002}
G.~C. Walsh, H.~Ye, and L.~G. Bushnell.
\newblock Stability analysis of {N}etworked {C}ontrol {S}ystems.
\newblock {\em IEEE Transactions on Control Systems Technology},
  10(3):438--446, 2002.

\bibitem{bzhang2014}
B.~Zhang, A.~Kruszewski, and J.~P. Richard.
\newblock A novel control design for delayed teleoperation based on
  delay-scheduled {L}yapunov-{K}rasovskii functionals.
\newblock {\em International Journal of Control}, 87(8):1694--1706, 2014.

\bibitem{jzhou2009}
J.~Zhou and Q.~Wu.
\newblock Exponential stability of impulsive delayed linear differential
  equations.
\newblock {\em IEEE Transactions on Circuits and Systems -- II: Express
  Briefs}, 56(9):744--748, September 2009.

\end{thebibliography}

\end{document}